\DeclareAcronym{sw}{
	short = SW, long = Sure Winning ,
	tag = abbrev
}
\DeclareAcronym{asw}{
	short = ASW, long = Almost-Sure Winning ,
	tag = abbrev
}
\DeclareAcronym{lsw}{
	short = LSW, long = Limit-Sure Winning ,
	tag = abbrev
}
\DeclareAcronym{dasw}{
	short = DASW, long = Deceptive Almost-Sure Winning ,
	tag = abbrev
}
\DeclareAcronym{pps}{
	short = PPS, long = Perceptually Permissive Strategy ,
	tag = abbrev
}
\DeclareAcronym{rpps}{
	short = RPPS, long = Randomized Perceptually Permissive Strategy ,
	tag = abbrev
}
\newcolumntype{C}{>{\centering\arraybackslash}X}
\newif\ifuseboldmathops
\newif\ifuseittextabbrevs
	\newcommand{\ie}{{\it i.e.~}}
	\newcommand{\ie}{i.e.~}
\newcommand{\Eventually}{\Diamond \, }
\newcommand{\until}{\mbox{$\, {\sf U}\,$}}
\newcommand{\dist}[1]{\mathcal{D}(#1)}
\newcommand{\supp}{\mathsf{Supp}}
\newtheorem{theorem}{Theorem}
\newtheorem{definition}{Definition}
\newtheorem{example}{Example}
\newtheorem{problem}{Problem}
\newtheorem{lemma}{Lemma}
\newtheorem{assumption}{Assumption}
\newcommand{\win}{\mathsf{Win}}
\newcommand{\occ}{\mathsf{Occ}}
\newcommand{\game}{\mathcal{G}}
\newcommand{\hgame}{\mathcal{H}}
\theoremstyle{plain}
\newtheorem{proposition}{Proposition}
\newtheorem{corollary}{Corollary}[theorem]
\newtheorem{propcorollary}{Corollary}[proposition]
  \newtheorem{lemma}{Lemma}
  \newtheorem{theorem}{Theorem}
\theoremstyle{definition}
\newtheorem{notation}{Notation}
  \newtheorem{problem}{Problem}
  \newtheorem{definition}{Definition}
  \newtheorem{assumption}{Assumption}
  \newtheorem{example}{Example}
\theoremstyle{remark}
\newcommand{\fig}[1]{\mbox{Fig.~\ref{#1}}}
\newcommand{\lma}[1]{Lemma~\ref{#1}}
\newcommand{\ex}[1]{Example~\ref{#1}}
\DeclareAcronym{ltl}{
	short = LTL, long = Linear Temporal Logic ,
	tag = abbrev
}
\DeclareAcronym{scltl}{
	short = sc-LTL, long = Syntactically Co-safe LTL ,
	tag = abbrev
}
\DeclareAcronym{dfa}{
	short = DFA, long = Deterministic Finite Automaton ,
	tag = abbrev
}
\DeclareAcronym{mdp}{
	short = MDP, long = Markov Decision Process ,
	tag = abbrev
}
\newcommand{\prob}{Pr}
\newcommand{\act}{{A}}
\newcommand{\dawin}{{\mathsf{DAWin}}}
\newcommand{\dswin}{{\mathsf{DSWin}}}
\newcommand{\dapre}{{\mathsf{DAPre}}}
\newcommand{\calAP}{{\mathcal{AP}}}
\newcommand{\calF}{{\mathcal{F}}}
\begin{document}

\title{Synthesis of Deceptive Strategies in Reachability Games with Action Misperception\\ \small (Technical Report)}

\author{\name Abhishek N. Kulkarni \email ankulkarni@wpi.edu \\
       \addr Worcester Polytechnic Institute, 100 Institute Road, \\
       Worcester, MA 01609 USA
       \AND
       \name Jie Fu \email jfu2@wpi.edu \\
       \addr Worcester Polytechnic Institute, 100 Institute Road, \\
       Worcester, MA 01609 USA
       }


\maketitle

\begin{abstract}
Strategic deception is an act of manipulating the opponent's perception to gain strategic advantages. In this paper, we study synthesis of deceptive winning strategies in two-player turn-based zero-sum reachability games on graphs with one-sided incomplete information of action sets. In particular, we consider the class of games in which Player 1 (P1) starts with a non-empty set of private actions, which she may `reveal' to Player 2 (P2) during the course of the game. P2 is equipped with an \emph{inference mechanism} using which he updates his perception of P1's action set whenever a new action is revealed. Under this information structure, the objective of P1 is to reach a set of goal states in the game graph while that of P2 is to prevent it. We address the question: \emph{how can P1 leverage her information advantages to deceive P2 into choosing actions that in turn benefit P1?} To this end, we introduce a dynamic hypergame model to capture the reachability game with evolving misperception of P2. Analyzing the game qualitatively, we design algorithms to synthesize deceptive sure and almost-sure winning regions, and establish two key results: (1) under sure-winning condition, deceptive winning strategy is equivalent to the non-deceptive winning strategy---\ie use of deception has no advantages, (2) under almost-sure winning condition, the deceptive winning strategy could be more powerful than the non-deceptive strategy. We illustrate our algorithms using a capture-the-flag game, and demonstrate the use of proposed approach to a larger class of games with temporal logic objectives.

\end{abstract}


\section{Introduction}
    \label{sec:introduction}

In a two-player reachability game, a controllable player P1 (player 1, pronoun `she') plays against an uncontrollable player P2 (player 2, pronoun `he') to reach a set of goal  states (also called final states for P1). Synthesis of winning strategies in reachability games on (finite) graphs is a central problem in several areas such as model checking \cite{clarke2018model,baier2008principles}, reactive synthesis \cite{pnueli1989synthesis}, control of discrete event systems \cite{ramadge1989control}, robotics \cite{fainekos2009temporal} and cybersecurity \cite{jha2002two,aslanyan2016quantitative}. The solutions to reachability games often also provide the basis for solving more complex $\omega$-regular games \cite{gradel2002automata}. In literature, two-player reachability games  have been extensively studied for the case in which both the players have symmetric and complete information  \cite{mcnaughton1993infinite,zielonka1998infinite,gradel2002automata,gradel2002automata,chatterjee2012survey}. However, the solution concepts for such games under asymmetric incomplete information have not been thoroughly studied. The asymmetric incomplete information in games means at least one players have incomplete 
knowledge about some of the game construct: states, actions, transition functions, or goal states/payoffs.  

In this paper, we address the problem of synthesizing winning strategies in two-player, deterministic, turn-based reachability games with one-sided incomplete information. Specifically, as the game starts, P1 has a set of private actions which are unknown to P2. Additionally, P1 knows P2's action set and also knows that P2 does not know P1's private actions. During the course of the game, P1 may use any of her private actions. Due to perfect observation, such private actions will be revealed to P2. We equip P2 with an \emph{inference mechanism} using which he may update his perception of P1's action set whenever P1 reveals a private action. Such a game of asymmetric incomplete information has been investigated for normal-form games \cite{rasmusen1989games}. Asymmetrical information between players are commonly encountered in conflict analysis \cite{hipel2020graph}, cybersecurity \cite{shiva2010game,carroll2011game,zhuang2010modeling,hespanha2000deception}, auctions \cite{brandt2003fundamental}, and decision making for autonomous systems \cite{thing2016autonomous}. 
In such games, we are interested to know \emph{whether P1 can leverage her information advantages to deceive P2 into choosing actions that in turn benefit P1?} 


We approach the above question by modeling the interaction between P1 and P2 as a hypergame. A hypergame, first introduced in \cite{bennett1977toward}, models an interaction between two players in which they may have incomplete information about their opponent's action capabilities, strategies, preferences or objectives \cite{wang1989solution}. Hypergames can have many levels of perception because one player may have misperceptions about the opponent's interpretations of their interaction. Thus, instead of using a single game to model this situation, a hypergame represents a set of \emph{perceptual games} that capture the interaction as perceived by the players, given the information known to them. 
However, most of the solution concepts studied in the literature for hypergames assume that the (mis)perceptions of the players do not change during the interaction \cite{wang1989solution,sasaki2014subjective} or focuses on payoff deception \cite{gharesifard2013stealthy} where one player synthesizes a stealthy strategy to hide the private information of payoff functions in normal-form games.

In this paper, we introduce a new hypergame model, called a \emph{dynamic hypergame}, which allows the perception of players to evolve during the game. Specifically, when P1 reveals a private action, P2 updates his perception of P1's action set and, thereby, his counter-strategy. We propose two algorithms for \emph{qualitative} analysis of dynamic hypergames: Algorithm~\ref{alg:dsw} to compute the \emph{deceptive sure winning region}, \ie the set of states from which P1 has a strategy to reach the final states in \emph{finitely many} steps by strategically revealing the private actions, and Algorithm~\ref{alg:DASW} to compute the \emph{deceptive almost-sure winning region}, \ie the set of states from which P1 has a strategy to reach the final states with \emph{probability one and a undetermined number of steps} by strategically revealing the private actions. We note that by strategically revealing her private actions, P1 consciously controls P2's perception to her advantage, which is a deceptive behavior \cite{ettinger2010theory}. We assess the advantage of deception by comparing the size of deceptive sure and almost-sure winning regions computed using dynamic hypergames with the respective winning regions in the corresponding game with complete, symmetric information. In particular, we say deception is advantageous when there exists a state which is sure (resp., almost-sure) losing for P1 in a game with complete, symmetric information, but is \emph{deceptively} sure (resp., almost-sure) winning for her in the game with one-sided incomplete information.

Based on two proposed algorithms, we derive two important results for this class of games with one-sided incomplete information: (i) under the sure winning condition, P1 gains \emph{no advantage} by using deception, and (ii) under the almost-sure winning condition, in some game configurations, P1  gains advantage by using deception. Specifically, P1 can ensure to achieve the reachability objective with probability one by initially misinforming P2 of his action capabilities.

\subsection{Related Work}

In the games with incomplete information, both players have perfect observability but at least one of them has \emph{incomplete} information about at least one of the following components  \cite{levin2002games}: (a) the action capabilities of the opponent, (b) the objectives of the opponent, (c) the game rules, and (d) what knowledge does the opponent have about what I know about his knowledge about ... \textit{ad infinitum}. This class of games differs from games with imperfect information, in which players have complete knowledge of all aspects (a)-(d), but may not have perfect observation about the history (state-action sequences in the game) \cite{morgenstern1953theory}.

Games with incomplete information have been studied extensively using two models: Bayesian games and hypergames. Bayesian games, introduced by Harsanyi \cite{harsanyi1967games}, transform a game with incomplete information to a game with imperfect information by capturing players incomplete information as a type variable, which is not observable to other players. However, this transformation depends on the so-called \emph{consistency of priors} assumption which states that the set of possible types of players is a common knowledge.

Hypergames, first introduced in \cite{bennett1977toward}, do not impose the \emph{consistency of priors} assumption. As a result, both players can play different games, which they construct in their minds based on the information available to them. This property of hypergames allows us to explicitly model the \emph{unawareness} of the player, which is often exploited in deception. In addition, solving for winning strategies is also computationally less expensive than solving Bayesian games \cite{sasaki2012hypergames}. This is because the analysis of hypergames under subjective rationalizability requires us to consider players' behavior in only a few perceptual games. Whereas, we need to consider the best responses given all possible reachable beliefs of other's types to solve for Bayesian Nash equilibrium.

In the past, hypergame model has been used to study deception \cite{gutierrez2015modeling,kovach2016temporal,kovach2019trust}. These papers mainly focus on extending the notion of Nash equilibrium to level-$k$ normal form hypergames. The authors \cite{gharesifard2013stealthy} use the notion of H-digraph to establish necessary and sufficient conditions for deceivability. An H-digraph models a hypergame as a graph with nodes representing different outcomes in a normal-form game and edges representing a perceived improvement of outcome for a player. However, the game model studied in our paper is not a normal-form game, but instead a game on graph. A hypergame model based on a game on graph has been defined in our previous work \cite{kulkarni2020deceptive,li2020dynamic}, wherein we study the games with one-sided incomplete information about payoffs in temporal logic.

This paper is an extended version of our previous work \cite{kulkarni2020synthesis}, which studied the deceptive almost-sure winning with action deception. In comparison to that, we newly introduce an algorithm to compute deceptive strategies under sure-winning condition and provide the theoretical analysis on the advantages gained by using deception under sure and almost-sure winning conditions. We also include an experiment to synthesize deceptive winning strategy for P1 and show the extension of our solution approach from reachability games to games on graphs in which player's objectives are given as temporal logic formulas.


\paragraph{Structure of this paper.} After recalling the preliminaries in Section~\ref{sec:preliminaries}, we formalize the problem statement in Section~\ref{sec:problem-formulation}. Section~\ref{sec:dynamic-hypergame-on-graph} presents the dynamic hypergame model to capture the interaction between P1 and P2. Sections~\ref{sec:dsw-synthesis} and \ref{sec:dasw-synthesis} define the notions of deceptive sure and almost-sure winning strategies and present algorithms to synthesize them. The key results of this papers are derived in these sections. Lastly, Section~\ref{sec:experiment} demonstrates our approach using a gridworld motion planning problem given scLTL specifications. We conclude the paper in Section~\ref{sec:conclusion} by stating our conclusions and future work.









\section{Preliminaries}
    \label{sec:preliminaries}

\textbf{Notation.} 
Given a set $X$, we denote a probability distribution over $X$ by $d: X \rightarrow [0, 1]$, the set of all probability distributions over $X$ by $\dist{X}$ and the support of a distribution $d \in \dist{X}$ by $\supp(d) = \{ x \in X \mid d(x) > 0\}$.

In this section, we review the traditional approach used to analyze reachability games with complete, symmetric information. 
A reachability game models the interaction between two players P1 and P2 over a graph. P1's objective in such a game is to visit a final state and that of P2 is to prevent P1 from completing her task.

\begin{definition}[Reachability Game with Symmetric Information] 
    \label{def:game-on-graph}
    
    A two-player, deterministic, turn-based, zero-sum reachability game on graph with complete, symmetric information is a tuple \[\game = \langle S, A_1 \cup A_2, T, F \rangle ,\] where 
    
    \begin{itemize}
        \item $S = S_1 \cup S_2$ is the set of states partitioned into P1's states, $S_1$, and P2's states, $S_2$. P1 chooses an action when $s \in S_1$ and P2 chooses an action when $s \in S_2$;
        
        \item $A_1$ and $A_2$ are the set of actions of P1 and P2, respectively. The set of all actions is denoted by $\act = A_1 \cup A_2$;
        
        \item $T : S \times \act \rightarrow S$ is a deterministic transition function that maps a state and an action to a successor state;
        
        \item $F \subseteq S$ is a set of final states. 
    \end{itemize} 
\end{definition}

In a reachability game with complete, symmetric information, the game structure $\game$ is known to both players. Given an initial state $s_0 \in S$, a \textit{game-play} is constructed as an infinite sequence of state-action pairs selected by two players $\tau = s_0 a_0 s_1 a_1 s_2 a_2 \ldots$ such that $s_{i+1} = T(s_i, a_i)$ for all $i \geq 0$. A \textit{game-run} is the projection of the game-play onto the state set $S$ and is denoted by $\rho =  \tau \downharpoonright_S = s_0 s_1 s_2 \ldots$. Similarly, an \textit{action-history} is the projection of game-play onto the action set $A$, and is denoted by $\alpha = \tau \downharpoonright_\act = a_0 a_1 a_2 \ldots$. The $i$-th element of $\rho$ and $\alpha$ are denoted by $\rho_i$ and $\alpha_i$, respectively. Let $\occ(\rho) = \{s \in S \mid \exists k \in \mathbb{N} \text{ s.t. } s = \rho_k \}$ be the set of all states that appear in the game-run $\rho$. A game-play is said to be \emph{winning} for P1 if $\occ(\rho) \cap F \neq \emptyset$. Otherwise, the game-play is said to be \emph{winning} for P2.

A \emph{memoryless}, \emph{randomized} strategy of player $j \in \{1, 2\}$ is a function $\pi_j: S_j \rightarrow \dist{A_j}$. A strategy is said to be \emph{deterministic} if the support of $\pi_j(s)$ is singleton for all $s \in S_j$. The set of all memoryless strategies for player $j$ is denoted as $\Pi_j$. We only consider memoryless strategies because reachability games enjoy memoryless determinacy \cite{Mazala2002}. We refer to a pair of strategies $(\pi_1, \pi_2) \in \Pi_1 \times \Pi_2$ as a strategy profile. A run $\rho$ is said to be compatible with a strategy profile $(\pi_1, \pi_2)$ if for any state $s_i \in S_j$, there exists an action $a \in A_j$ such that $\pi_j(s_i)(a) > 0$ and $s_{i+1} = T(s_i, a)$. The set of all possible game-runs starting at $s \in S$ and compatible with the strategy profile $(\pi_1, \pi_2)$ is denoted by $\mathsf{Outcomes}(s, \pi_1, \pi_2)$. Given a strategy profile $(\pi_1, \pi_2)$, the probability that a state in $F$ is visited from some state $s \in S$ is denoted by $\prob_s^{\pi_1, \pi_2}(F) = \prob(\occ(\rho) \cap F \neq \emptyset \mid \rho \in \mathsf{Outcomes}(s, \pi_1, \pi_2))$.


Next, we recall the notions of \emph{sure} and \emph{almost-sure winning} in the reachability game with complete, symmetric information. 

\begin{definition}[Sure Winning Strategy]
    \label{def:sw-strategy}
    
    A memoryless strategy $\pi_1 \in \Pi_1$ is said to be \textit{sure winning} for P1 at a state $s \in S$ if and only if, for any strategy $\pi_2 \in \Pi_2$ of P2, every run $\rho \in \mathsf{Outcomes}(s, \pi_1, \pi_2)$ satisfies $\occ(\rho) \cap F \neq \emptyset$. 
\end{definition}

\begin{definition}[Almost-sure Winning Strategy]
    \label{def:asw-strategy}
    
    A memoryless strategy $\pi_1 \in \Pi_1$ is said to be \textit{almost-sure winning} for P1 at a state $s \in S$ if and only if, for any strategy $\pi_2 \in \Pi_2$ of P2, we have that $Pr_s^{\pi_1, \pi_2}(F) = 1$, that is, a state in $F$ is visited with probability one. 
\end{definition}

A game state is called a \textit{sure (almost-sure) winning state} for P1 if and only if P1 has a sure (almost-sure) winning strategy from that state. The set of all sure (almost-sure) winning states of P1 is called the \textit{sure (almost-sure)} \textit{region} of P1. The sure and almost-sure winning strategies, states and regions are defined for P2 analogous to that for P1. In a reachability game, the sure winning region of a player is equal to his/her almost-sure winning region \cite{deAlfaro2007concurrent}. We denote the winning regions of P1 and P2 by $\win_1$ and $\win_2$, respectively.

When P1 is playing a reachability game, an adversarial P2 is playing a \emph{safety game}. In the safety game, P2's objective is to prevent P1 from reaching any state in $F$. P2's winning strategy strategy in such a safety game is called a \emph{permissive strategy} \cite{bernet2002permissive}. Formally, P2's permissive strategy in $\game$ is defined as a function $\zeta: \win_2 \rightarrow \dist{A_2}$ such that for any $a \in \supp(\zeta(s))$, we have $T(s, a) \in \win_2$.


    

    


%
%
\begin{algorithm}[t]
\caption{Zielonka's Recursive Algorithm}
\label{alg:zielonka}
\begin{algorithmic}[1]
    \Function{Zielonka}{$\game, F$}
        \State $Z_0 \gets F$
        \Repeat
            \State $Y_1 \gets \mathsf{Pre}_1(Z_k)$
            \State $Y_2 \gets \mathsf{Pre}_2(Z_k)$
            \State $Z_{k+1} \gets Z_k \cup Y_1 \cup Y_2$
        \Until {$Z_{k+1} = Z_k$}
        \State \Return $\mathsf{Win}_1 = Z_k, \win_2 = S \setminus \win_1$
    \EndFunction
\end{algorithmic}
\end{algorithm}
Algorithm~\ref{alg:zielonka} is the classical algorithm used to compute the sure (almost-sure) winning region for P1 \cite{mcnaughton1993infinite,zielonka1998infinite}. The algorithm uses two sub-procedures defined as follows: 
\begin{subequations}
    \begin{align}
      \mathsf{Pre}_1(U) & = \{v \in V_1 \mid \exists a \in A_1 : T(v, a) \in U \} \label{eq:pre-exists}\\
      \mathsf{Pre}_2(U) & = \{v \in V_2 \mid \forall a \in A_2 : T(v, a) \in U\} \label{eq:pre-forall}
    \end{align}
\end{subequations}
Intuitively, $\mathsf{Pre}_1(U)$ is the set of P1 states at which P1 has an action to lead the game into the given subset of states $U \subseteq S$. Whereas, $\mathsf{Pre}_2(U)$ is the set of P2 states at which every action of P2 leads the game into $U$. Altogether, $\mathsf{Pre}_1(U) \cup \mathsf{Pre}_2(U)$ represents the set of states from where the game enters $U$ within one-step. 
We introduce a running example to explain the concepts discussed this paper.

\begin{example}[Part A]
    \label{ex:1}
    
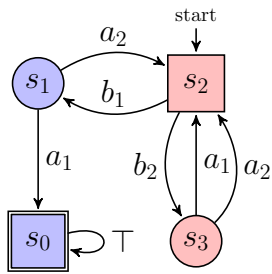
\begin{figure}[htb]
    \centering
    \begin{tikzpicture}[->,>=stealth',shorten >=1pt,auto,node distance=3cm,scale=.7, semithick, transform shape,square/.style={regular polygon,regular polygon sides=4}]
    		\node[accepting, state, square,fill=blue!25!white]     (0)                           {\LARGE$s_0$};
    		\node[state,fill=blue!25!white]                        (1) [above of =0]             {\LARGE$s_1$};
    		\node[initial above, state, square,fill=red!25!white] (2) [right of =1]             {\LARGE$s_2$};
    		\node[state,fill=red!25!white]                        (3) [below of =2]             {\LARGE$s_3$};
    		
    		\path[->]   
    		(0) edge[loop right]    node{\LARGE$\top$}            (0)
    		(1) edge                node{\LARGE$a_1$}             (0)
            (1) edge[bend left]     node{\LARGE$a_2$}             (2)
            (2) edge[bend left]     node[above]{\LARGE$b_1$}      (1)
            (2) edge[bend right]    node[left]{\LARGE$b_2$}       (3)
            (3) edge                node[right]{\LARGE$a_1$}      (2)
            (3) edge[bend right, out=-45, in=-145]    node[right]{\LARGE$a_2$}      (2)
           ;
    	\end{tikzpicture}
    \caption{An example game on graph. The state space is divided into two parts: blue states $\win_1 = \{s_0, s_1\}$ are sure (almost-sure) winning for P1, and red states $\win_2 = \{s_2, s_3\}$ are sure (almost-sure) winning for P2.}
    \label{fig:game-1}
\end{figure}
    Consider the game graph as shown in \fig{fig:game-1}. The circle states $\{s_1, s_3\}$ are P1 states and the square states $\{s_0, s_2\}$ are P2 states. The objective of P1 is to reach to the final states set $F = \{s_0$\} from the initial state $s_2$. P1's action set is $A_1 = \{a_1, a_2\}$ and P2's action set is $A_2 = \{b_1, b_2\}$.

    The sure (or almost-sure) winning region of P1 in the game is $\win_1 = \{s_0, s_1\}$, shown in Fig.~\ref{fig:game-1} as blue states. This is intuitively understood as follows. P1 can win from state $s_1$ by choosing the action $a_1$. However, the states $\win_2 = \{s_2, s_3\}$, shown in Fig.~\ref{fig:game-1} as red states, are losing for P1 because P2 has a strategy to indefinitely restrict the game within $\win_2$ by always selecting action $b_2$ at state $s_2$. 
\end{example}

\section{Games with One-sided Incomplete Information of Action Sets}
    \label{sec:problem-formulation}

In this paper, we study the class of games in which P1 and P2 play with different information about each other's action sets. In particular, we consider the \emph{games with one-sided incomplete information of action sets} with the following information structure.



\begin{assumption}(Information Structure)
    \label{assume:information-asymmetry} 
    Both players have complete information about the game state space $S$, the final states $F$ at all times, and
    \begin{itemize}
        \item P1 has complete information about the action sets of both the players, \ie P1 knows $A_1$ and $A_2$;
        
        \item P2 only knows his own action set $A_2$, but misperceives P1's action set to be a subset $X_0 \subsetneq A_1$ at the beginning of the game;
        
        \item P1 knows $X_0$. 
    \end{itemize}
\end{assumption}

As a consequence of Assumption~\ref{assume:information-asymmetry}, the two players perceive their interaction differently. Given complete information, P1 knows the \emph{true} game, $\langle S, A_1 \cup A_2, T, F \rangle$. Whereas, at the beginning, the game in P2's mind is a misperceived game $\langle S, X_0 \cup A_2, T, F \rangle$ for some $X_0 \subsetneq A_1$. 

\begin{notation}
A game in which P1's perceived action set is $X \subseteq A_1$ is denoted by $\game(X) = \langle S, X \cup A_2, T, F \rangle$. The sure and almost-sure winning regions of P1 and P2 in the game $\game(X)$ are denoted by $\win_1(X)$ and $\win_2(X)$, respectively. We refer to the game $\game(A_1)$, which corresponds to a game with complete, symmetric information, as the \emph{true} game. The winning regions $\win_1(A_1)$ of P1 and $\win_2(A_1)$ of P2 in the \emph{true} game $\game(A_1)$ are called the \emph{non-deceptive} winning regions.
\end{notation}

Notice that Assumption~\ref{assume:information-asymmetry} allows P2's perceptual game to evolve during the interaction. Assuming complete observability, we expect that whenever P1 uses a private action $a \notin X$, P2 would update his perception $X$ to at least include $a$. That is, his updated perception would be a superset of $X \cup \{a\}$. However, it is possible for P2 to add more actions than just $a$ to his current perception. For instance, suppose that P2's perception of P1 is that she can jump 1 or 2 stairs at a time. If P1 jumps 5 stairs during her turn, then P2 can infer that she can also jump 3 and 4 stairs at a time. We formalize such inference capabilities by equipping P2 with an \emph{inference mechanism} defined as follows:

\begin{definition}[Inference Mechanism]       
    \label{def:inference-mechanism} 
    
    A deterministic inference mechanism is a function $\eta: 2^{A_1} \times A_1 \rightarrow 2^{A_1}$ that maps a subset of actions $X \subseteq A_1$ and an action $a \in A_1$ to another subset of actions $Y = \eta(X, a)$ such that $a \in Y$. 
\end{definition}

\setcounter{example}{0}
\begin{example}[Part B]
    \label{ex:2}
    
    Suppose that in \ex{ex:1} (Part A), the action $a_1$ of P1 is a private action. Thus, at the beginning of the interaction, P2's perception of P1's action set is $X_0 = \{a_2\}$ and his perceptual game is the game $\game_2 = \game(X_0)$ as shown in \fig{fig:game-2}. Notice that \fig{fig:game-2} does not include edges corresponding to action $a_1$. On the other hand, P1's perceptual game is same as the \emph{true} game $\game_1 = \game(A_1)$ shown in \fig{fig:game-1}. Given that the final states set $\{s_0\}$ is not reachable in $\game_2$, P2 misperceives both of his actions, $b_1$ and $b_2$, to be safe to play at state $s_2$. However, in reality, only the action $b_2$ is safe in the \emph{true} game, $\game_1$.
    
    \begin{figure}
        \centering
        \begin{tikzpicture}[->,>=stealth',shorten >=1pt,auto,node distance=3cm,scale=.7, semithick, transform shape,square/.style={regular polygon,regular polygon sides=4}]
    		\node[accepting, state, square,fill=blue!20!white]     (0)                           {\LARGE$s_0$};
    		\node[state,fill=red!25!white]                        (1) [above of =0]             {\LARGE$s_1$};
    		\node[initial above, state, square,fill=red!20!white] (2) [right of =1]             {\LARGE$s_2$};
    		\node[state,fill=red!20!white]                        (3) [below of =2]             {\LARGE$s_3$};
    		
    		\path[->]   
    		(0) edge[loop right]    node{\LARGE$\top$}            (0)
            (1) edge[bend left]     node{\LARGE$a_2$}             (2)
            (2) edge[bend left]     node[above]{\LARGE$b_1$}      (1)
            (2) edge[bend right]    node[left]{\LARGE$b_2$}       (3)
            (3) edge[bend right]    node[right]{\LARGE$a_2$}      (2)
           ;
    	\end{tikzpicture}
        \caption{Perceptual game of P2 when he misperceives P1's action set to be $X_0 = \{a_2\}$. The state space is divided into two parts: the blue state $\{s_0\}$ is perceived by P2 as the only winning state of P1, and the red states $\{s_1, s_2, s_3\}$ are perceived by him to be winning for himself. Due to misperception, this partition is different from the partition in Fig.~\ref{fig:game-1}.} 
        \label{fig:game-2}
    \end{figure}
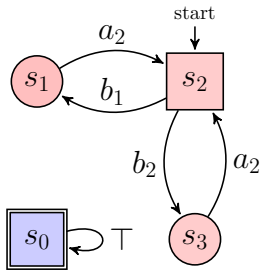
    
    Moreover, when P1 is aware of P2's misperception  $X_0$, she may compute a deceptive strategy which, intuitively, would not use $a_1$ unless the game state is $s_1$. Because, if P1 uses $a_1$ at $s_3$ then P2 will update his perception to $X_1 = A_1$ and conclude that action $b_1$ is unsafe to play at state $s_2$. In this case, P1 will never be able to win the game. 
\end{example}

When P2 is capable of updating her perception during the game, we say P2 has an \textit{evolving perception} of the game. This poses an interesting decision-making problem for P1: \textit{How can P1 improve her strategy in the reachability game if she has the knowledge of P2's initial misperception $X_0$ and his inference mechanism $\eta$?} With this insight, we formally state our problem statement.

\begin{problem} 
    \label{prob:problem-statement} 
    
    Consider a game with one-sided incomplete information in which Assumption~\ref{assume:information-asymmetry} holds and P1 knows P2's inference mechanism, $\eta$. Determine a winning strategy for P1 to satisfy her reachability objective under sure and almost-sure winning conditions.
\end{problem}

When P2's perception is evolving, a winning strategy of P1 must strategize \emph{when to reveal a private action} so as to control P2's perception to her own advantage. We recognize such a behavior to be a deceptive behavior \cite{ettinger2010theory}, and thereby call such a winning strategy to be a \emph{deceptive winning strategy}. Under this notion, we want to investigate whether the use of deceptive winning strategy provides any advantage to P1 over using a non-deceptive winning strategy. Intuitively, the use of deceptive strategy is advantageous for P1 if she has a deceptive winning strategy at some state $s \in S$, at which she \emph{does not} have a non-deceptive winning strategy. 

\section{Dynamic Hypergame on Graph}
    \label{sec:hypergame-model}

In this section, we review the formal definition of a hypergame and then introduce a model, which we call a \textit{dynamic hypergame on graph}, to capture the interaction between P1 and P2 as described in Problem~\ref{prob:problem-statement}.


\subsection{Hypergame Model}
    \label{sec:hypergame-classical}
    
Hypergames are defined inductively based on the level of perception of individual players. A zeroth-level hypergame is a game with complete, symmetric information, where the perceptual games of both players' are identical to the \emph{true} game. In a first-level hypergame, at least one of the players, say P2, misperceives the \emph{true} game but neither of them is aware of it. In this case, both players believe their perceptual game to be the \emph{true} game and play according to their perceptual games, which are zeroth-level hypergames. In a second-level hypergame, one of the players becomes aware of the misperception and is able to reason about her opponent's perceptual game. Recognizing that a second-level hypergame can represent the information structure given in Assumption~\ref{assume:information-asymmetry}, we use a second-level hypergame\footnote{In general, it is possible define hypergames of an arbitrary level. The interested readers may refer to \cite{wang1989solution} for an elaborate discussion on the higher levels of hypergames.} to model Problem~\ref{prob:problem-statement}.


\begin{definition}[Second-level Hypergame for Action Deception]
    \label{def:second-level-hgame}
    
    Let $X \subseteq A_1$ be the action set of P1 as perceived by P2. A second-level hypergame representing the game between P1 and P2 under Assumption~\ref{assume:information-asymmetry} is the tuple,
    \[
        \hgame^2 = \langle \hgame_1^1, \game(X) \rangle,
    \]
    where
    \begin{itemize}
        \item $\hgame_1^1 = \langle \game(A_1), \game(X) \rangle$ is the first-level hypergame being played by P1 in which $\game(A_1)$ is P1's perceptual game and $\game(X)$ is P2's perceptual game;
        \item $\game(X)$ is the zeroth-level hypergame being played by P2.
    \end{itemize}
\end{definition}

While Definition~\ref{def:second-level-hgame} effectively represents the information structure in Problem~\ref{prob:problem-statement} for a fixed perception $X$ of P2, it does not explicitly model the effect of evolving perception on the hypergame $\hgame^2$. To address this limitation, we extend Definition~\ref{def:second-level-hgame} to define a \emph{dynamic hypergame model}. But first, we introduce a graphical model called inference graph to represent the evolution of P2's perceptual game.



\subsection{Inference Graph}
    \label{sec:inference-graph}

\begin{definition}[Inference Graph]
    \label{def:inference-graph}
    
    Given that P2's perceptual game is always an element from the set $\Gamma = \{\game(X_i) \mid X_i \subseteq A_1\}$, an inference graph is defined as a tuple, 
    \[
        \mathcal{I} = \langle \Gamma, E, \gamma_0 \rangle
    \] 
    where
    \begin{itemize}
        \item $\Gamma$ is the set of vertices of $\mathcal{I}$, 
        
        \item $E: \Gamma \times A_1 \rightarrow \Gamma$ defines the set of action-labeled edges of $\mathcal{I}$ such that, $E(\gamma_i, a) = \gamma_j$ for any $\gamma_i = \game(X_i)$,  $\gamma_j = \game(X_j)$ and any $a \in A_1$ if and only if $\eta(X_i, a) = X_j$, where $\eta$ is the inference mechanism of P2, and
        
        \item $\gamma_0 = \game(X_0)$ is the initial perceptual game of P2.
    \end{itemize}
\end{definition}

Intuitively, the nodes of the inference graph represent the possible perceptual games of P2. An edge of the inference graph with a label $a \in A_1$ corresponds to an evolution of P2's perceptual game when he observes P1 using the action $a$. We assume that the inference graph is complete, \ie $E(\gamma, a)$ is defined for any $\gamma = \game(X) \in \Gamma$ and $a \in A_1$. Clearly, $E(\gamma, a) = \gamma$ holds for any P1 action $a \in X$ which is already known to P2.

\subsection{Dynamic Hypergame on Graph}
    \label{sec:dynamic-hypergame-on-graph}
    
Given the notion of an inference graph, we define a dynamic hypergame on graph as a synchronous product of the \textit{true} game and the inference graph.

\begin{definition}[Dynamic Hypergame on Graph]
    \label{def:hgame-on-graph}
    
    Given the \textit{true} game between P1 and P2, $\game(A_1)$, and P2's inference graph, $\mathcal{I}$, the dynamic hypergame on graph is the tuple,
    \[
        \hgame = \game(A_1) \otimes \mathcal{I} = \langle V, \act = A_1 \cup A_2, \Delta, \mathcal{F} \rangle,
    \]
    where 
    \begin{itemize}
        \item $V = S \times \Gamma$  is the set of states in the dynamic hypergame; 
        
        \item $\act = A_1 \cup A_2$ is the set of actions of P1 and P2;
        
        \item $\Delta: V \times \act \rightarrow V$ is the transition function such that, given two states $v = (s, \gamma) \in V$, $v' = (s', \gamma') \in V$ and an action $a \in \act$, we have $\Delta(v, a) = v'$ if and only if $T(s, a) = s'$ and $E(\gamma, a) = \gamma'$; and
        
        \item $\mathcal{F} = F \times \Gamma$ is the set of final states. 
    \end{itemize}
\end{definition}

Hereafter, we refer to a dynamic hypergame on graph as simply a hypergame. Analogous to the game on graph, we define a \textit{hypergame-play} in $\hgame$ as an infinite, ordered sequence of state-action pairs $\tau = v_0 a_0 v_1 a_1 \ldots$ and the action-history as $\alpha = \tau \downharpoonright_\act  = a_0 a_1 \ldots$. In contrast to the reachability game on graphs, we distinguish between (i) a \textit{hypergame-run}, which is the projection of trace onto the hypergame state space $\nu = \tau \downharpoonright_V = v_0 v_1 v_2 \ldots$, and (ii) a \textit{game-run}, which is the projection of trace onto game state space $\rho = \tau \downharpoonright_S = s_0 s_1 s_2 \ldots$, where $s_k$ is the game state corresponding to hypergame state $v_k = (s_k, \gamma)$, for some $\gamma \in \Gamma$. A hypergame-play $\tau$ is said to be winning for P1 when the corresponding hypergame-run $\nu = \tau \downharpoonright_V$ visits the final states $\calF$ in the hypergame, \ie $\occ(\nu) \cap \calF \neq \emptyset$. It follows from the definition of hypergame-play that whenever $\occ(\tau \downharpoonright_V) \cap \calF \neq \emptyset$ then the corresponding game-run $\rho = \tau \downharpoonright_S$ satisfies $\occ(\rho) \cap F \neq \emptyset$.

\setcounter{example}{0}
\begin{example}[Part C]
    \label{ex:idea-example-1} 
    
    The hypergame modeling the asymmetric information from \ex{ex:2} (Part B) is shown in \fig{fig:hgame}. The figure only shows the reachable states. Every state in the hypergame is represented as a tuple of a game state and the current perception of P2 at that state. Given $X_0 = \{a_2\}$, two perceptual games of P2: $\gamma_1 = \game(\{a_2\})$ and $\gamma_2 = \game(\{a_1, a_2\})$, are possible. Any hypergame-play that visits the final state $(s_0, \gamma_2)$ is winning for P1. Therefore, the hypergame-plays $\tau_1 = (s_2, \gamma_1) b_1 (s_1, \gamma_1) a_1 (s_0, \gamma_2)$ and $\tau_2 = (s_2, \gamma_1) b_2 (s_3, \gamma_1) a_1 (s_2, \gamma_2) b_1 (s_1, \gamma_2) a_1 (s_0, \gamma_2)$ are the examples of winning plays for P1. Interestingly, in the next section, we will show that the play $\tau_2$ may never occur if both players act rationally. However, it is possible for the play $\tau_1$ to be observed. 
    
    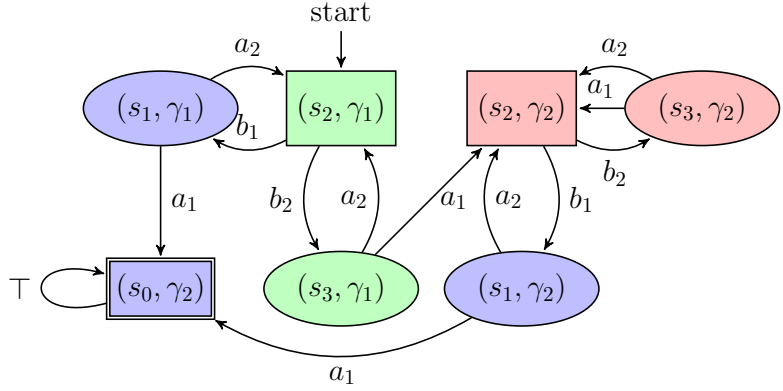
\begin{figure}
        \centering
        \begin{tikzpicture}[->,>=stealth',shorten >=1pt,auto,node distance=3cm,scale=.8,semithick, transform shape,square/.style={rectangle}]
    		\tikzstyle{every state}=[fill=black!10!white]
    		\node[state, accepting, square, draw, fill=blue!25!white]      (0)                           {\Large$(s_0, \gamma_2)$};
    \Large		\node[state, ellipse, draw, fill=blue!25!white]                (1) [above of =0]             {\Large$(s_1, \gamma_1)$};
    		\node[state, initial above, square, draw, fill=green!25!white]  (2) [right of =1]             {\Large$(s_2, \gamma_1)$};
    		\node[state, ellipse, draw, fill=green!25!white]                (3) [below of =2]             {\Large$(s_3, \gamma_1)$};
    		\node[state, ellipse, draw, fill=blue!25!white]                (4) [right of =3]             {\Large$(s_1, \gamma_2)$};
    		\node[state, square, draw,fill=red!25!white]                 (5) [right of =2]             {\Large$(s_2, \gamma_2)$};
    		\node[state, ellipse, draw,fill=red!25!white]                (6) [right of =5]             {\Large$(s_3, \gamma_2)$};

    		
    		\path[->]   
    		(0) edge[loop left]    node{\Large$\top$}                      (0)
    		(1) edge                node{\Large$a_1$}                      (0)
            (1) edge[bend left]     node{\Large$a_2$}                      (2)
            (2) edge[bend left]     node[above]{\Large$b_1$}               (1)
            (2) edge[bend right]    node[left]{\Large$b_2$}                (3)
            (3) edge                node[right]{\Large$a_1$}               (5)
            (3) edge[bend right]    node[left]{\Large$a_2$}                (2)
            (4) edge[bend left]     node[below]{\Large$a_1$}               (0)
            (4) edge[bend left]     node[right]{\Large$a_2$}               (5)
            (5) edge[bend left]     node[right]{\Large$b_1$}               (4)
            (5) edge[bend right]    node[below]{\Large$b_2$}               (6)
            (6) edge                node[above]{\Large$a_1$}               (5)
            (6) edge[bend right]    node[above]{\Large$a_2$}               (5)
           ;
    	\end{tikzpicture}
        \caption{The dynamic hypergame on graph. The state space is divided into three parts: blue states $\{(s_0, \gamma_2), (s_1, \gamma_1), (s_1, \gamma_2)\}$ are sure (almost-sure) winning for P1, and red states $\{(s_2, \gamma_2), (s_3, \gamma_2)\}$ are sure (almost-sure) winning for P2 regardless of whether P1 uses deception or not. The green states $\{(s_2, \gamma_1), (s_3, \gamma_1)\}$ are almost-sure winning, but not sure winning, for P1 when she uses deception.}
        \label{fig:hgame}
    \end{figure}
\end{example}

\section{Synthesis of Deceptive Sure-Winning Strategy}
    \label{sec:dsw-synthesis}

In this section, we address the problem of synthesizing \emph{deceptive sure winning} strategy for P1, given the knowledge of P2's initial misperception, $X_0$, and his inference mechanism, $\eta$. 

\subsection{P2's Rational Strategy}
    \label{subsec:pp-strategy}
    
We start by understanding how a \emph{rational} P2 selects his strategy given his evolving perception. Recall from Section~\ref{sec:preliminaries} that whenever P1 plays a reachability game, P2 plays a safety game in which his winning strategy is given as a permissive strategy \cite{bernet2002permissive}. Intuitively, by following the permissive strategy, P2 is ensured to remain within his winning region. 
However, when P2's perception evolves during the game, his perceived winning region also changes, which means his perceived permissive strategy must also change. To capture this dependence of permissive strategy on the perception of P2, we define the notion of \emph{perceptually permissive strategy}.

\begin{definition}[Perceptually Permissive Action]
    \label{def:perceptually-perm-action}
    
    Given P2's perception $X \subseteq A_1$, an action $a \in A_2$ is said to be a perceptually permissive action for P2 at a state $s \in \win_2(X)$ if and only if the state $s' = T(s, a)$ is winning for P2 under his perception $X$; \ie $s' \in \win_2(X)$. The set of all perceptually permissive actions at the state $s$ is denoted by $M_X(s)$. 
\end{definition}

\begin{definition}[Perceptually Permissive Strategy]
    \label{def:perceptually-perm-strategy}
    
    A \emph{perceptually permissive strategy} of P2 at a state $s \in \win_2(X)$ is a memoryless randomized strategy $\mu_X: \win_2 \rightarrow \dist{A_2}$ such that only perceptually permissive actions have a positive probability to be selected, \ie $\emptyset \subsetneq \supp(\mu_X(s)) \subseteq M_X(s)$.
\end{definition}

Note that a perceptually permissive strategy is a set of randomized strategies that are defined over P2's perceptual game, $\game(X)$. Given that P1 reasons about their interaction a hypergame model, we lift Definitions~\ref{def:perceptually-perm-action} and \ref{def:perceptually-perm-strategy} from the game model to hypergame.

\begin{notation}
    Given a hypergame state $v = (s, \gamma)$ with $s \in \win_2(X)$ and $\gamma = \game(X)$ for some $X \subseteq A_1$, the set of perceptually permissive actions at $v$ is defined as $M(v) = M_X(s)$. The perceptually permissive strategy at $v$ is a distribution $\mu(v) \in \dist{M(v)}$.
\end{notation}

We now establish a result to capture the effect of evolving perception on perceived winning regions of P1 and P2.

\begin{proposition}[Monotonicity Property]
    \label{prop:P1-win-containment}
    
    Given two subsets $X, Y \subseteq A_1$, if $X \subseteq Y$ then $\win_1(X) \subseteq \win_1(Y)$, or equivalently $\win_2(X) \supseteq \win_2(Y)$.
\end{proposition}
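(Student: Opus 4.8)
The plan is to prove the containment $\win_1(X) \subseteq \win_1(Y)$ by a direct strategy-transfer argument, and then obtain the equivalent statement about $\win_2$ by complementation. The crucial observation is that the games $\game(X) = \langle S, X \cup A_2, T, F\rangle$ and $\game(Y) = \langle S, Y \cup A_2, T, F\rangle$ differ \emph{only} in the action set available to P1: the state space $S$, P2's action set $A_2$, the transition function $T$, and the final states $F$ are identical across the two games. Hence enlarging P1's actions from $X$ to $Y$ can only expand her strategic options while leaving P2's capabilities untouched. Since reachability games enjoy sure $=$ almost-sure equivalence, it suffices to argue the inclusion for the sure-winning characterization.

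Concretely, I would fix a state $s \in \win_1(X)$ and take a memoryless sure-winning strategy $\pi_1 \in \Pi_1$ for P1 at $s$ in $\game(X)$. By Definition~\ref{def:sw-strategy}, $\pi_1$ maps each P1-state to a distribution over the available actions $X$, so $\supp(\pi_1(s')) \subseteq X \subseteq Y$ for every $s' \in S_1$, which certifies that $\pi_1$ is also a legal P1-strategy in $\game(Y)$. The next step is to verify that this same $\pi_1$ remains sure winning at $s$ in $\game(Y)$. Fixing an arbitrary P2-strategy $\pi_2 \in \Pi_2$ (which is simultaneously a P2-strategy in $\game(X)$, as $A_2$ is shared), I would observe that every run $\rho \in \mathsf{Outcomes}(s, \pi_1, \pi_2)$ generated in $\game(Y)$ uses only P1-actions drawn from $X$ and P2-actions drawn from $A_2$, with successors determined by the common $T$. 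Such a run is therefore indistinguishable from a run in $\mathsf{Outcomes}(s, \pi_1, \pi_2)$ computed in $\game(X)$; since $\pi_1$ is sure winning there, $\occ(\rho) \cap F \neq \emptyset$. As $\pi_2$ was arbitrary, $s \in \win_1(Y)$.

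For the equivalent formulation, I would invoke determinacy of reachability (equivalently, safety) games, recorded in the preliminaries: the winning regions partition the state space, so $\win_2(X) = S \setminus \win_1(X)$ and $\win_2(Y) = S \setminus \win_1(Y)$. Taking complements of $\win_1(X) \subseteq \win_1(Y)$ immediately yields $\win_2(X) \supseteq \win_2(Y)$.

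I do not expect a genuine obstacle, since the statement is a monotonicity property; the only point demanding care is the second step, namely establishing that the transferred $\pi_1$ is still \emph{winning} rather than merely well-defined. This hinges on checking that the outcome set in $\game(Y)$ under $\pi_1$ introduces no \emph{new} runs relative to $\game(X)$ — which holds precisely because $\pi_1$ never invokes an action outside $X$ and P2's action set is identical in both games. An equally short alternative would be an induction on the fixed-point iteration of Algorithm~\ref{alg:zielonka}: the operator $\mathsf{Pre}_2$ depends only on the shared $A_2$, while $\mathsf{Pre}_1$ is monotone in P1's action set (a larger action set can only enlarge the existential preimage in \eq{eq:pre-exists}); combined with the usual monotonicity of both operators in their set argument, one shows $Z_k^X \subseteq Z_k^Y$ at every stage starting from the common seed $Z_0 = F$, and then passes to the fixed point.
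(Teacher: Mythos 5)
Your argument is correct, but your primary route differs from the paper's. The paper proves the inclusion by induction on the rank of the Zielonka fixed-point iteration: it defines $\win_1^k(X)$ as the states added by stage $k$ of Algorithm~\ref{alg:zielonka} run on $\game(X)$, shows $\win_1^0(X)=\win_1^0(Y)=F$, and then argues the inductive step separately for P1-states (an action witnessing membership in $\mathsf{Pre}_1$ over $X$ also lies in $Y$) and for P2-states ($\mathsf{Pre}_2$ quantifies only over the shared $A_2$, so it is unaffected by the change of P1's action set). This is exactly the ``equally short alternative'' you sketch in your last sentence. Your main argument is instead a semantic strategy-transfer: a sure-winning $\pi_1$ in $\game(X)$ has support in $X\subseteq Y$, hence is a legal strategy in $\game(Y)$, and since it introduces no new compatible runs, $\mathsf{Outcomes}(s,\pi_1,\pi_2)$ is identical in both games, so it remains winning; the $\win_2$ version then follows by complementation via determinacy, which the paper also uses implicitly. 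Both proofs are sound. Your transfer argument is more elementary and yields the slightly stronger conclusion that the \emph{same} strategy witnesses membership in $\win_1(Y)$; the paper's rank-based induction has the advantage of setting up the $\mathsf{Pre}$-operator machinery and rank bookkeeping that it reuses in the proofs of Corollary~\ref{cor:P2-SW-action-is-PP} and Theorem~\ref{thm:sw-equals-dsw}, where the reasoning is carried out directly on the one-step predecessor operators rather than on whole strategies.
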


\begin{proof}
    Recall that $\win_1(X)$ and $\win_1(Y)$ can be computed using Algorithm~\ref{alg:zielonka} over games $\game(X)$ and $\game(Y)$, respectively. Let $\mathsf{rank}(s)$ be the smallest $k \in \mathbb{N}$ such that $s \in Z_k$ in Algorithm~\ref{alg:zielonka}. Given some $k \geq 0$, let $\win_1^k(X) = \{s \in \win_1(X) \mid \mathsf{rank}(s) \leq k\}$ be the set of P1's sure winning states in the perceptual game $\game(X)$ with rank less than or equal to $k$. Let $K$ denote the maximum rank of any state in $\win_1(X)$. We will show by induction that $\win_1^k(X) \subseteq \win_1^k(Y)$, for all $0 \leq k \leq K$. 
    
    \textbf{Basis:} The statement holds for $k = 0$ because $\win_1^0(X) = \win_1^0(Y) = F$ is true by definition of Algorithm~\ref{alg:zielonka}.
    
    \textbf{Inductive step:} Suppose $\win_1^m(X) \subseteq \win_1^m(Y)$ holds for any $0 \leq m < K$. Then, we must show that $\win_1^{m+1}(X) \subseteq \win_1^{m+1}(Y)$ also holds. To this end, we show that for any $s \in \win_1^{m+1}(X)$, we have $s \in \win_1^{m+1}(Y)$. 
    
    Consider a P1 state, $s \in S_1 \cap \win_1^{m+1}(X)$. By Equation~(\ref{eq:pre-exists}), there exists an action $a \in A_1$ such that $T(s, a) \in \win_1^m(X)$. Let $s' = T(s, a)$. By induction hypothesis, $s' \in \win_1^m(Y)$. Furthermore, if $a \in X$ then $a \in Y$ because $X \subseteq Y$. Hence, there exists an action $a \in Y$ at the state $s$ such that $T(s, a) \in \win_1^m(Y)$. Therefore, $s \in \win_1^{m+1}(Y)$.
    
    Consider a P2 state, $s \in S_2 \cap \win_1^{m+1}(X)$. By Equation~(\ref{eq:pre-forall}), we know that for any action $a \in A_2$, the resulting state $T(s, a)$ is an element of $\win_1^m(X)$. Although P2 misperceives P1's action set, his action set $A_2$ is the same in any game  $\game(X)$. Hence, for all actions $a \in A_2$, by induction hypothesis, we have $T(s, a) \in \win_1^{m}(X) \subseteq \win_1^{m}(Y)$. Therefore, we have $s \in \win_1^{m+1}(Y)$. 
    
    It follows that $\win_1(X) = \win_1^{K}(X) \subseteq \win_1^{K}(Y) \subseteq \win_1(Y)$. Equivalently, we have $\win_2(X) \supseteq \win_2(Y)$. 
    %
        %
\end{proof}


    

\begin{propcorollary}
    \label{cor:P2-SW-action-is-PP}
    
    For any $X \subseteq A_1$, we have $M_{A_1}(s) \subseteq M_{X}(s)$.
\end{propcorollary}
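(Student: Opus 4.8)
The plan is to reduce the statement directly to the Monotonicity Property (Proposition~\ref{prop:P1-win-containment}). First I would recall that the perceptually permissive actions are defined purely in terms of P2's perceived winning region: by Definition~\ref{def:perceptually-perm-action}, for a state $s \in \win_2(X)$ we have $M_X(s) = \{a \in A_2 \mid T(s, a) \in \win_2(X)\}$, and likewise $M_{A_1}(s) = \{a \in A_2 \mid T(s, a) \in \win_2(A_1)\}$ at any state $s \in \win_2(A_1)$.

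The key step is to instantiate the Monotonicity Property with $Y = A_1$. Since $X \subseteq A_1$ always holds by assumption, the proposition yields $\win_2(X) \supseteq \win_2(A_1)$. In particular, any state $s \in \win_2(A_1)$ also lies in $\win_2(X)$, so $M_X(s)$ is well-defined wherever $M_{A_1}(s)$ is, and the asserted containment between the two sets is meaningful in the first place.

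Finally, I would take an arbitrary $a \in M_{A_1}(s)$. By definition this means $T(s, a) \in \win_2(A_1)$. Applying the inclusion $\win_2(A_1) \subseteq \win_2(X)$ gives $T(s, a) \in \win_2(X)$, which is exactly the condition for $a \in M_X(s)$. Since $a$ was arbitrary, $M_{A_1}(s) \subseteq M_X(s)$, as claimed.

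There is essentially no hard part here: the corollary is an immediate consequence of monotonicity, and the only point worth verifying is that $M_X(s)$ is defined at the relevant states, which the inclusion $\win_2(A_1) \subseteq \win_2(X)$ guarantees. The conceptual content---that any action P2 regards as safe in the \emph{true} game remains safe under a more restricted (misperceived) action set, because P2's safe region can only grow as his perception of P1's capabilities shrinks---is carried entirely by the proposition, so the proof of the corollary itself is a one-line inclusion argument.
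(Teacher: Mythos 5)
Your proof is correct and matches the paper's intent exactly: the paper states this as an immediate corollary of the Monotonicity Property without a separate proof, and your argument---instantiating Proposition~\ref{prop:P1-win-containment} with $Y = A_1$ to get $\win_2(A_1) \subseteq \win_2(X)$ and then unfolding Definition~\ref{def:perceptually-perm-action}---is precisely the intended one-line inclusion.
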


Corollary~\ref{cor:P2-SW-action-is-PP} provides an important insight into P2's evolving misperception. 
It states that (i) P2 never perceives a permissive action in $\game(A_1)$ to be non-permissive in any of the perceptual games in $\Gamma$, and (ii) P2 might perceive some of his non-permissive actions in $\game(A_1)$ to be permissive in $\game(X)$ when $X \subsetneq A_1$. This observation is an important property of action deception which will be useful to prove Theorem~\ref{thm:sw-equals-dsw} in Section~\ref{subsec:dsw-properties}.


\subsection{Deceptive sure winning Strategy}    
    \label{subsec:dsw-properties}
    
Given the notion of a perceptually permissive strategy of P2, we formally define a deceptive sure winning strategy of P1.

\begin{definition}[Deceptive Sure Winning Strategy]
    \label{def:dsw-strategy}
    
    A memoryless strategy $\pi_1 \in \Pi_1$ is said to be a \emph{deceptively sure winning} for P1 at a state $v \in V$ if and only if, for any perceptually permissive strategy $\mu$ of P2 and for every run $\nu \in \mathsf{Outcomes}(v, \pi_1, \mu)$, we have $\occ(\nu) \cap \calF \neq \emptyset$.
\end{definition}

In Definition~\ref{def:dsw-strategy}, P1 reasons only about all possible perceptually permissive strategies of P2, which is in contrast to Definition~\ref{def:sw-strategy} where P1 reasons about all possible strategies of P2. A hypergame state $v \in V$ from which P1 has a deceptively sure winning strategy is called as a \emph{deceptively sure winning state}. The exhaustive set of deceptively sure winning states is called the \emph{deceptively sure winning region}, denoted by $\dswin_1$. Note that deceptive sure winning region is not defined for P2 because he does not know the hypergame, $\hgame$.

\begin{algorithm}[t]
\caption{Deceptive Sure-Wining Region}
\label{alg:dsw}

\begin{algorithmic}[1]
\Function{DSW}{${\mathcal H}$}
    \State $Z_0 = \win_1(A_1) \times \Gamma$
    \Repeat
        \State $Y_1 = \mathsf{DPre}_1(Z_k)$
        \State $Y_2 = \mathsf{DPre}_2(Z_k)$
        \State $Z_{k+1} = Z_k \cup Y_1 \cup Y_2$
    \Until {$Z_{k+1} = Z_k$}
    \State \Return $\dswin_1 = Z_k$
\EndFunction
\end{algorithmic}
\end{algorithm}

Importantly, we note that the strategy defined in Definition~\ref{def:dsw-strategy} is deceptive. This is because P1's deceptively sure winning strategy makes a conscious decision about when to reveal which private action to P2 during their interaction. 


By observing that (i) every perceptually permissive strategy of P2 is an element of $\Pi_2$, and (ii) a non-deceptive sure winning strategy of P1 ensures the completion of her reachability objective against any strategy in $\Pi_2$, we derive the following result.

\begin{proposition}
    \label{prop:sw-state-is-also-dsw}
    
    If a game state $s \in S$ is a non-deceptive sure winning state for P1 then, for any $\gamma \in \Gamma$, the hypergame state $v = (s, \gamma)$ is a deceptively sure winning state for P1.
\end{proposition}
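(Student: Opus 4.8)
The plan is to take P1's non-deceptive sure winning strategy in the \emph{true} game and lift it verbatim to the hypergame, exploiting the fact that the perception coordinate of a hypergame state never influences the game-state dynamics. First I would invoke Definition~\ref{def:sw-strategy}: since $s$ is a non-deceptive sure winning state, P1 has a memoryless strategy $\pi_1^\ast \in \Pi_1$ such that for every $\pi_2 \in \Pi_2$, every run $\rho \in \mathsf{Outcomes}(s, \pi_1^\ast, \pi_2)$ satisfies $\occ(\rho) \cap F \neq \emptyset$. I would then define a lifted hypergame strategy $\hat\pi_1$ by $\hat\pi_1(v) = \pi_1^\ast(s)$ for every $v = (s, \gamma)$, and argue that $\hat\pi_1$ witnesses deceptive sure winning at every such $v$.

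The key steps are as follows. By Definition~\ref{def:hgame-on-graph}, $\Delta((s,\gamma), a) = (s', \gamma')$ forces $T(s,a) = s'$, so the projection $\rho = \nu \downharpoonright_S$ of any hypergame-run $\nu$ onto the game state space is a legitimate game-run of $\game(A_1)$. I would fix an arbitrary perceptually permissive strategy $\mu$ of P2 and an arbitrary $\nu \in \mathsf{Outcomes}(v, \hat\pi_1, \mu)$, and then appeal to the two observations stated just before the proposition: observation~(i) gives that $\mu$ is an element of $\Pi_2$ (its only peculiarity is that its supports are restricted to perceptually permissive actions, and on P2 states outside his perceived winning region any completion still lies in $\Pi_2$), so that $\rho \in \mathsf{Outcomes}(s, \pi_1^\ast, \mu)$; observation~(ii) then yields $\occ(\rho) \cap F \neq \emptyset$. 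Finally, because $\calF = F \times \Gamma$, whichever state of $\rho$ witnesses the visit to $F$ corresponds to a hypergame state in $\calF$, so $\occ(\nu) \cap \calF \neq \emptyset$. Since $\mu$ and $\nu$ were arbitrary, Definition~\ref{def:dsw-strategy} is satisfied at $v$.

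I expect the only delicate point to be the exact relationship between perceptually permissive strategies and $\Pi_2$: I must confirm that quantifying over all perceptually permissive $\mu$ in Definition~\ref{def:dsw-strategy} amounts to quantifying over a subset of $\Pi_2$, rather than over a larger or incomparable class. This is precisely observation~(i), which holds because the permissiveness constraint can only shrink the support of P2's action distributions and because $A_2$ is identical across all perceptual games $\game(X)$. Once this is settled, no real obstacle remains: sure winning against all of $\Pi_2$ trivially implies winning against any subset, and the perception coordinate $\gamma$ merely rides along each run without affecting reachability of $F$.
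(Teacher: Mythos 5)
Your proposal is correct and follows essentially the same route as the paper, which derives the proposition directly from the two observations you cite: perceptually permissive strategies form a subset of $\Pi_2$, and a non-deceptive sure winning strategy prevails against all of $\Pi_2$. Your write-up merely makes explicit the lifting of $\pi_1^\ast$ to the hypergame and the projection of hypergame runs onto game runs, details the paper leaves implicit.
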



Algorithm~\ref{alg:dsw} computes the deceptively sure winning region of P1 given the hypergame $\hgame$. It is derived from Algorithm~\ref{alg:zielonka} by adopting the definitions of $\mathsf{Pre}_1(U)$ and $\mathsf{Pre}_2(U)$ as follows:
\begin{subequations}
    \begin{align}
      \mathsf{DPre}_1(U) & = \{v \in V_1 \mid \exists a \in A_1 : \Delta(v, a) \in U \} \label{eq:dpre-exists}\\
      \mathsf{DPre}_2(U) & = \{v \in V_2 \mid \forall a \in M(v) : \Delta(v, a) \in U\} \label{eq:dpre-forall}
    \end{align}
\end{subequations}
Intuitively, $\mathsf{DPre}_1(U)$ is the set of P1 states at which P1 has an action to lead the game into the given subset of states $U \subseteq V$. Whereas, $\mathsf{DPre}_2(U)$ is the set of P2 states at which every perceptually permissive action of P2 leads the game into $U$. Altogether, $\mathsf{DPre}_1(U) \cup \mathsf{DPre}_2(U)$ represents the set of states from where the game enters $U$ within one-step.

Algorithm~\ref{alg:dsw} is initialized with the set $Z_0 = \win_1(A_1) \times \Gamma$ because P1 has a non-deceptive sure winning strategy to complete her reachability objective from any state in $Z_0$. In the $i$-th iteration, the algorithm identifies the states to be added to $Z_{i+1}$ such that, from each of the newly added states, the game is ensured to enter $Z_i$ in one-step. The loop terminates when a fixed-point is reached; \ie when no new states can be added to $Z_{i+1}$.


Next, we show that, for any $X \subseteq A_1$, the set of states in $\game(X)$ from which P1 has a deceptive sure winning strategy is identical to the set of states in $\game(A_1)$ from which she has a non-deceptive sure winning strategy.

\begin{theorem}
    \label{thm:sw-equals-dsw}
    
    Let $\dswin_1 \downharpoonright_S = \{s \in S \mid v \in \dswin_1 \text{ and } s = v \downharpoonright_S \}$ be the set of projection of the deceptively sure winning states onto the game state space. It holds that $\win_1(A_1) = \dswin_1 \downharpoonright_S$. 

\end{theorem}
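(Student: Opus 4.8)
The plan is to establish the set equality $\win_1(A_1) = \dswin_1 \downharpoonright_S$ by proving two inclusions, one of which is essentially free and one of which carries all the work.

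The inclusion $\win_1(A_1) \subseteq \dswin_1 \downharpoonright_S$ is immediate. Algorithm~\ref{alg:dsw} is initialized with $Z_0 = \win_1(A_1) \times \Gamma$, so every hypergame state $(s,\gamma)$ with $s \in \win_1(A_1)$ already belongs to $\dswin_1$; equivalently, this is exactly the content of Proposition~\ref{prop:sw-state-is-also-dsw}. Projecting onto $S$ yields $\win_1(A_1) \subseteq \dswin_1 \downharpoonright_S$.

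The substantive inclusion is $\dswin_1 \downharpoonright_S \subseteq \win_1(A_1)$, for which I would argue the contrapositive: no hypergame state whose game component lies in $\win_2(A_1) = S \setminus \win_1(A_1)$ can be deceptively sure winning. To this end I would exhibit a single perceptually permissive strategy $\mu$ of P2 that traps P1 inside the set $B \eqbydef \win_2(A_1) \times \Gamma$. First, $B$ avoids the goal: since $F \subseteq \win_1(A_1)$, we have $B \cap \calF = (\win_2(A_1) \cap F) \times \Gamma = \emptyset$. Next I would show $B$ is invariant. At a P1 state $(s,\gamma) \in B$, because $s \in \win_2(A_1)$ is a P1-owned losing state, \emph{every} P1 action satisfies $T(s,a) \in \win_2(A_1)$ (otherwise $s$ would enter $\win_1(A_1)$ via $\mathsf{Pre}_1$); the perception coordinate $E(\gamma,a)$ is irrelevant, so the successor stays in $B$ no matter what P1 does, including revealing a private action. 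At a P2 state $(s,\gamma) \in B$ with $\gamma = \game(X)$, the true game furnishes a permissive action $a \in M_{A_1}(s)$ with $T(s,a) \in \win_2(A_1)$; here Corollary~\ref{cor:P2-SW-action-is-PP} does the heavy lifting, since $M_{A_1}(s) \subseteq M_X(s) = M((s,\gamma))$ guarantees that this truly-safe action is still perceptually permissive under P2's (possibly impoverished) perception $X$. Defining $\mu$ to play such an $a$ at every P2 state of $B$ therefore keeps the game in $B$, and Proposition~\ref{prop:P1-win-containment} ensures $s \in \win_2(A_1) \subseteq \win_2(X)$ so that $\mu$ is a legitimate perceptually permissive strategy wherever it is invoked. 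Consequently, for \emph{any} P1 strategy $\pi_1$, every run in $\mathsf{Outcomes}((s,\gamma), \pi_1, \mu)$ remains in $B$ and never meets $\calF$, so $(s,\gamma) \notin \dswin_1$. Hence $\dswin_1 \subseteq \win_1(A_1) \times \Gamma$, and therefore $\dswin_1 \downharpoonright_S \subseteq \win_1(A_1)$, which completes the equality.

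The main obstacle, and the conceptual heart of the ``deception gives no advantage under sure winning'' result, is the P2-state case. Enlarging P2's perception makes \emph{more} actions look permissive ($M_{A_1}(s) \subseteq M_X(s)$), which weakens $\mathsf{DPre}_2$ and could in principle enlarge $\dswin_1$. The delicate point is that this extra freedom is always a superset: P2 never loses access to a genuinely safe action, so under the sure-winning (worst-case over all perceptually permissive strategies) semantics P1 must still defeat the choice in which P2 plays exactly its true safe action. Corollary~\ref{cor:P2-SW-action-is-PP} is precisely what licenses keeping that action in the support of $\mu$, and I expect verifying this containment together with the well-definedness of $\mu$ on $B$ (via Proposition~\ref{prop:P1-win-containment}) to be the only genuinely load-bearing steps; the P1-state invariance and the goal-avoidance of $B$ are routine.
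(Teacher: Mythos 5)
Your proof is correct, and its combinatorial core coincides with the paper's: both arguments hinge on the two facts that (i) a P1 state in $\win_2(A_1)$ has no action reaching $\win_1(A_1)$, and (ii) a P2 state in $\win_2(A_1)$ retains a truly permissive action that, by Corollary~\ref{cor:P2-SW-action-is-PP}, still lies in $M_X(s)$ under any perception $X$ --- so the set $\win_2(A_1)\times\Gamma$ cannot be escaped. The packaging differs, though. The paper argues algorithmically: it shows $\mathsf{DPre}_1(Z_0)=\mathsf{DPre}_2(Z_0)=\emptyset$, so the fixed point of Algorithm~\ref{alg:dsw} is already $Z_0=\win_1(A_1)\times\Gamma$; the conclusion then rests on the claim (asserted when the algorithm is introduced, but not separately proved) that Algorithm~\ref{alg:dsw} computes $\dswin_1$. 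You instead argue semantically from Definition~\ref{def:dsw-strategy}, exhibiting a single explicit perceptually permissive strategy $\mu$ supported on truly permissive actions and observing that every outcome against $\mu$ remains in $B=\win_2(A_1)\times\Gamma$, which is disjoint from $\calF$. Your route buys independence from the algorithm's correctness, makes explicit the passage from one-step invariance to ``no run ever reaches $\calF$'' (left implicit in the paper), and attends to the well-definedness of $\mu$ via $\win_2(A_1)\subseteq\win_2(X)$ from Proposition~\ref{prop:P1-win-containment}, a detail the paper glosses over; the paper's version is shorter and ties the theorem directly to the procedure it presents. Both proofs are sound, and yours is, if anything, the more self-contained of the two.
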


\begin{proof}
Given $Z_0 = \win_1(A_1) \times \Gamma$, to establish that $\dswin_1 \downharpoonright_S = \win_1(A_1)$ we will show that $\mathsf{DPre}_1(Z_0)$ and $\mathsf{DPre}_2(Z_0)$ are empty at the end of first iteration of Algorithm~\ref{alg:dsw}. 


\textbf{Case I} $\mathbf{(\mathsf{DPre}_1(Z_0) = \emptyset)}$. By contradiction. Suppose there exists a hypergame state $v = (s,i) \in V_1 \setminus Z_0$ that is added to $\dswin_1$ in the first iteration. Then, by Eq.~(\ref{eq:dpre-exists}), there exists an action $a \in A_1$ such that $\Delta(v, a) \in Z_0$. But this would mean $T(s, a) \in \win_1(A_1)$ which in turn implies that the state $s$ is a sure winning state of P1. Thus, the hypergame state $v$ must be in $Z_0$---a contradiction.

\textbf{Case II} $\mathbf{(\mathsf{DPre}_2(Z_0) = \emptyset)}$. We will show that at every state $v = (s, \gamma) \in V \setminus Z_0$, P2 has a perceptually permissive action $a^\ast \in A_2$ such that $\Delta(v, a^\ast) \notin Z_0$. To see this, first, we note that $v \in V \setminus Z_0$ implies that $s \notin \win_1(A_1)$ from case I. Second, we recall that whenever $s \notin \win_1(A_1)$, we have $s \in \win_2(A_1)$. This implies that there exists an action $a^\ast \in A_2$ such that $T(s, a^\ast) \in \win_2(A_1)$. As $\win_2(X) \supseteq \win_2(A_1)$ holds for any subset $X$ of $A_1$, the action $a^\ast$ must be a permissive action at $v = (s, \gamma)$ as long as $\gamma \ne \game(A_1)$. Thus, we conclude by Eq.~(\ref{eq:dpre-forall}) that $\mathsf{DPre}_2(Z_0) = \emptyset$.  
\end{proof}

Theorem~\ref{thm:sw-equals-dsw} states that P1 gains \emph{no} advantage by using action deception under the sure winning condition. Given that P1's sure and almost sure winning regions are equal \cite{deAlfaro2007concurrent}, we note that P1's non-deceptive sure and almost-sure winning regions are equal to her deceptive sure winning region. In other words, P1's non-deceptive sure, almost-sure strategies and deceptive sure winning strategy are all equally powerful. We revisit our running example to illustrate our conclusion. 


\setcounter{example}{0}
\begin{example}[Part D]
    \label{ex:4}
    
    Consider the hypergame shown in \fig{fig:hgame}. Recall from \ex{ex:1} (Part A) that sure winning region of P1 is $\win_1(A_1) = \{s_0, s_1\}$. Therefore, following Proposition~\ref{prop:sw-state-is-also-dsw}, we have $Z_0 = \{(s_0, \gamma_2), (s_1, \gamma_2), (s_1, \gamma_1)\}$ (we omit $(s_0, \gamma_1)$ as it is unreachable). Consider the states $(s_2, \gamma_1)$ and $(s_2, \gamma_2)$. At $(s_2, \gamma_2)$, P2's perceptual game is $\game(\{a_1, a_2\})$. Thus, his perceptually permissive winning strategy at $(s_2, \gamma_2)$ is $\mu((s_2, \gamma_2)) = b_2$. At $(s_2, \gamma_1)$, P2's perceptual game is $\game(\{a_2\})$. Therefore, he has two perceptually permissive actions at $(s_2, \gamma_1)$: $b_1$ and $b_2$. Given that the action $b_2$ is losing for P1 by Definition~\ref{def:dsw-strategy}, both the states $(s_2, \gamma_1)$ and $(s_3, \gamma_1)$ are not deceptively sure winning for P1.

\end{example}

In above example, we see that when P1 uses the sure winning condition, she considers the worst-case strategy of P2. However, from P2's perspective, he is indifferent to using any actions in $\{b_1, b_2\}$ at $(s_2, \gamma_1)$. In other words, P2 may choose either action with some positive probability. In such a case, when P2 uses a randomized strategy, we want to know whether the use of deception is advantageous to P1 or not? In the next section, we answer this question positively. 

\section{Synthesis of Deceptive Almost Sure-Winning Strategy}
    \label{sec:dasw-synthesis}

We start by defining a deceptive almost-sure winning strategy in an analogous way to deceptive sure winning strategy, \ie by adapting Definition~\ref{def:asw-strategy} to the hypergame. 

\begin{assumption}
    \label{assume:PPStrategy-Support}
    
    P2 plays a randomized perceptually permissive strategy $\mu$ such that for all $v \in \win_2$, we have $\supp\left(\mu\left( v \right)\right) = M(v)$.
\end{assumption}

Assumption~\ref{assume:PPStrategy-Support} states that every perceptually permissive action at a given state can be chosen by P2 with a positive probability. Given this assumption, we will identify the set of states from which P1 has a deceptive almost-sure winning strategy that leverages Corollary~\ref{cor:P2-SW-action-is-PP} to almost-surely satisfy her reachability objective.


\begin{definition}
    \label{defn:dasw} 
    
    A memoryless strategy $\pi_1 \in \Pi_1$ is said to be \textit{deceptively almost-sure winning} for P1 at a state $v \in V$ if and only if, for any perceptually permissive strategy $\mu$ of P2 satisfying Assumption~\ref{assume:PPStrategy-Support} and for every run $\nu \in \mathsf{Outcomes}(v, \pi_1, \pi_2)$, we have that $Pr_v^{\pi_1, \mu}(\calF) = 1$. 
\end{definition}

A state $v \in V$ from which P1 has a deceptive almost-sure winning strategy is called as a \emph{deceptive almost-sure winning state}. The exhaustive set of deceptive almost-sure winning states is called the \emph{deceptive almost-sure winning region}, and is denoted by $\dawin_1$. 

We propose Algorithm~\ref{alg:DASW} to compute the deceptive almost-sure winning region for P1. Our algorithm is inspired by the algorithm presented in \cite{deAlfaro2007concurrent} to compute the almost-sure winning region in a concurrent $\omega$-regular games. The idea behind Algorithm~\ref{alg:DASW} is to identify and exploit the states $v = (s, \gamma)$ at which P2's perceptually permissive actions $M(v)$ includes some of his non-permissive actions in the \emph{true} game, $\game(A_1)$. To this end, we define the following sub-routines: 
\begin{subequations}
    \begin{align}
        \dapre_1^1(U) &= \{v \in V_1 \mid \exists a \in A_1 \text{ s.t. } \Delta(v, a) \in U\}, \label{eq:dapre-11} \\ 
        \dapre_1^2(U) &= \{v \in V_2 \mid \forall b \in M(v) \text{ s.t. } \Delta(v, b) \in U\} , \label{eq:dapre-12}\\
        \dapre_2^1(U) &= \{v \in V_1 \mid \forall a \in A_1 \text{ s.t. } \Delta(v, a) \in U\}, \label{eq:dapre-21} \\
        \dapre_2^2(U) &= \{v \in V_2 \mid \forall b \in M(v) \text{ s.t. } \Delta(v, b) \in U\}. \label{eq:dapre-22}
    \end{align}
\end{subequations}
%
%
%
%
%

\begin{proposition}
    \label{prop:asw-state-is-also-dasw}
    
    If a game state $s \in S$ is a non-deceptive almost-sure winning state for P1 then, for any $\gamma \in \Gamma$, the hypergame state $v = (s, \gamma)$ is a deceptively almost-sure winning state for P1.
\end{proposition}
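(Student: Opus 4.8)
The plan is to reduce the claim to Proposition~\ref{prop:sw-state-is-also-dsw} by exploiting the coincidence of sure and almost-sure winning regions in reachability games. First I would recall that, by \cite{deAlfaro2007concurrent}, $\win_1(A_1)$ is simultaneously P1's non-deceptive sure and almost-sure winning region. Hence $s \in \win_1(A_1)$ supplies a \emph{memoryless sure-winning} strategy $\pi_1^\ast \in \Pi_1$ for the true game $\game(A_1)$: every run consistent with $\pi_1^\ast$ reaches $F$ no matter which actions P2 selects along the way. This last point is exactly what the attractor construction underlying Algorithm~\ref{alg:zielonka} guarantees, since the universal quantifier in $\mathsf{Pre}_2$ (Equation~(\ref{eq:pre-forall})) forces the successor to stay in the winning set against \emph{every} P2 action at each step, so $\pi_1^\ast$ defeats arbitrary adaptive P2 play and not merely the memoryless strategies of Definition~\ref{def:sw-strategy}.

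Next I would lift $\pi_1^\ast$ to the hypergame by setting $\hat\pi_1(v) = \pi_1^\ast(s)$ for each $v = (s, \gamma)$, which is well-defined because $\pi_1^\ast$ is memoryless on $S$ and P1 has complete knowledge of $A_1$. The key step is a run-level correspondence: since $\Delta((s,\gamma),a) = (T(s,a), E(\gamma,a))$, any hypergame-run $\nu \in \mathsf{Outcomes}(v, \hat\pi_1, \mu)$ projects to a game-run $\rho = \nu\downharpoonright_S$ whose P1 moves are $\pi_1^\ast$-consistent and whose P2 moves are whatever the perceptually permissive $\mu$ selected. Because $\pi_1^\ast$ is sure winning against every P2 behavior, $\rho$ satisfies $\occ(\rho) \cap F \neq \emptyset$; and since $\calF = F \times \Gamma$, this is equivalent to $\occ(\nu) \cap \calF \neq \emptyset$. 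Using a sure-winning $\pi_1^\ast$ here deliberately sidesteps the fact that $\mu$ may depend on the perception component $\gamma$ and hence project to a history-dependent P2 strategy on $S$: because $\pi_1^\ast$ defeats arbitrary P2 play, that dependence becomes irrelevant.

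I would then conclude that \emph{every} run in $\mathsf{Outcomes}(v, \hat\pi_1, \mu)$ reaches $\calF$, so $Pr_v^{\hat\pi_1, \mu}(\calF) = 1$ for every perceptually permissive $\mu$, in particular those satisfying Assumption~\ref{assume:PPStrategy-Support}. By Definition~\ref{defn:dasw}, the state $v = (s, \gamma)$ is therefore deceptively almost-sure winning, for every $\gamma \in \Gamma$. In fact this argument re-establishes that $v$ is deceptively \emph{sure} winning, so the statement also follows directly from Proposition~\ref{prop:sw-state-is-also-dsw} together with the observation that a deceptively sure winning state is a fortiori deceptively almost-sure winning—the almost-sure condition quantifies over a subclass of the permissive strategies considered in the sure case, and ``all outcomes reach $\calF$'' immediately yields probability-one reachability.

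The only step meriting genuine care is the run-projection correspondence and the preservation of the reachability event under $\downharpoonright_S$. This becomes immediate once $\calF = F \times \Gamma$ is invoked and $\Delta$ is seen to act deterministically in both coordinates, so I expect no substantive obstacle beyond this bookkeeping; the conceptual work is entirely carried by the SW$=$ASW equivalence and the robustness of a sure-winning strategy against arbitrary adversary behavior.
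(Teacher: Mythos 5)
Your proof is correct and follows the same route the paper implicitly relies on: the paper states this proposition without an explicit proof, justifying the analogous Proposition~\ref{prop:sw-state-is-also-dsw} only by the observation that perceptually permissive strategies are particular P2 strategies against which a non-deceptive winning strategy already prevails, and your argument is the fleshed-out version of exactly that reasoning combined with the SW $=$ ASW coincidence for reachability games. Your write-up is in fact more careful than the paper's own sketch, since you note that a lifted perceptually permissive strategy $\mu$ depends on the perception component $\gamma$ and therefore projects to a history-dependent strategy on $S$ rather than an element of $\Pi_2$, and you correctly neutralize this by invoking a sure-winning $\pi_1^\ast$ that is robust to arbitrary adversary behavior.
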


\begin{algorithm}
    \caption{Computation of the \ac{dasw} region and strategy for P1}
    \label{alg:DASW}
    \begin{algorithmic}[1]
        \Function{DASW}{$\hgame$}
            \State $Z_0 = \win_1(A_1) \times \Gamma$
            \While{True}
                \State $C_k = \textsc{Safe-2}(V \setminus Z_k)$ 
                \State $Z_{k+1} = \textsc{Safe-1}(V \setminus C_k)$ 
                \If{$Z_{k+1} = Z_{k}$}
                    \State End loop
                \EndIf
            \EndWhile
            \State \Return $Z_k$
        \EndFunction
    \end{algorithmic}
    \vspace{0.5em}
    \begin{algorithmic}[1]
        \Function{\textsc{Safe-$i$}}{$U$}
            \State $Y_0 = U$
            \While{True} 
                \State $W_1 = \dapre_i^1(Y_j)$
                \State $W_2 = \dapre_i^2(Y_j)$
                \State $Y_{j+1} = Y_j \cap (W_1 \cup  W_2)$
                \If{$Y_{j+1} = Y_{j}$}
                    \State End loop
                \EndIf
            \EndWhile
            \State \Return{$Y_j$}
        \EndFunction
    \end{algorithmic}
\end{algorithm}

Algorithm~\ref{alg:DASW} works as follows. Following Proposition~\ref{prop:asw-state-is-also-dasw}, we initialize the algorithm with $Z_0 = \win_1(A_1) \times \Gamma$ and then iteratively compute the sets $C_k$ and $Z_{k+1}$ for $k = 0, 1, \ldots$ until a fixed-point is reached. 
In the $k$-th iteration, the set $C_k \subseteq V \setminus Z_k$ is computed using sub-routine \textsc{Safe-2}, which identifies the subset of states in $V \setminus Z_k$ from which P1 has no strategy to exit $V \setminus Z_k$. In other words, $C_k$ is a set of states in which P2 can enforce P1 to stay.  
The sub-routine \textsc{Safe-2} starts with $Y_0 = V \setminus Z_k$ and iteratively computes $Y_j$ for $j = 0, 1, \ldots$ by identifying (i) $W_1 = \dapre_2^1(Y_j)$: P1 states within $Y_j$, from which any action $a \in A_1$ leads to a state in $Y_j$, and (ii) $W_2 = \dapre_2^2(Y_j)$: P2 states within $Y_j$, from which any of his perceptually permissive action $a \in M(v)$ leads to a state in $Y_j$. 
Next, the set $Z_{k+1}$ is computed using the sub-routine \textsc{Safe-1}, which identifies the subset of states in $V \setminus C_k$ from which P1 is ensured to visit $Z_k$ in one-step. The sub-routine \textsc{Safe-1} starts with $Y_0 = V \setminus C_k$ and iteratively computes $Y_j$ for $j = 0, 1, \ldots$ by identifying (i) $W_1 = \dapre_1^1(Y_j)$: P1 states within $Y_j$ from which she has an action to enter $Y_j$ in one step, and (ii) $\dapre_1^2(Y_j)$: P2 states within $Y_j$ from which any perceptually permissive action of P2 leads to a state in $Y_j$. It is observed that as $k$ increases, the set $C_k$ shrinks while the set $Z_k$ expands. Intuitively, this is because the states in $C_k$ may have transitions leading outside $C_k$, while remaining within $V \setminus Z_k$. If a state, say $v \in V \setminus Z_k$ that is not in $C_k$, is included in $Z_{k+1}$, then all states in $C_k$ that have a transition going to $v$ are excluded from $C_{k+1}$ and have a potential to be included in $Z_{k+2}$. However, once the fixed-point is reached, say in iteration $K$, we show that all deceptively almost-sure winning states of P1 are included in $Z_K$. A deceptively almost-sure winning strategy can then be computed based on the proof of Theorem~\ref{thm:DASW-sufficiency}.

\setcounter{example}{0}
\begin{example}[Part E]
    \label{ex:5}
     
    In contrast to Example~\ref{ex:4} (Part D), in this part we show that the state $(s_2, \gamma_1)$, which was \emph{not} deceptively sure winning for P1, is a deceptively almost-sure winning state for her. Intuitively, this is because when the game is stuck in a loop between the states $(s_2, \gamma_1)$ and $(s_3, \gamma_1)$, Assumption~\ref{assume:PPStrategy-Support} guarantees that the perceptually permissive action $b_1$ at $(s_2, \gamma_1)$ will eventually be selected. In other words, the game will eventually reach the state $(s_1, \gamma_1)$, from which P1 can win the game by revealing her private action, $a_1$. With this intuition, we describe how the Algorithm~\ref{alg:DASW} identifies $(s_2, \gamma_1)$ as a deceptively almost-sure winning state of P1.

    \subparagraph{Iteration 1 of Algorithm~\ref{alg:DASW}.} The first step is to compute $C_0$, \ie the subset of $V \setminus Z_0$ from which P2 can enforce P1 to remain within $V \setminus Z_0$. The \textsc{Safe-2} sub-routine takes 3 iterations to reach a fixed-point, at the end of which $C_0 = \{(s_2, \gamma_2), (s_3, \gamma_2)\}$. The next step is to compute $Z_1$, which the largest subset of $V \setminus C_0$ in which P1 can stay indefinitely. The \textsc{Safe-1} sub-routine takes 2 iterations to reach a fixed point. In its first iteration, $\dapre_1^1$ adds a state $(s_3, \gamma_1)$ and $\dapre_1^2$ adds a state $(s_2, \gamma_1)$ to $Z_1$. We note that $(s_2, \gamma_1)$ is added because the actions $b_1$ and $b_2$ are perceptually permissive actions for P2, both of which lead to a state in $V \setminus C_0$. 
    
    \subparagraph{Iteration 2 of Algorithm~\ref{alg:DASW}.} The fixed-point of \textsc{DASW} algorithm is reached in this iteration with $Z_2 = \{(s_0, \gamma_2),$ $(s_1, \gamma_1), (s_1, \gamma_2), (s_2, \gamma_1), (s_3, \gamma_1)\}$. 
\end{example}

Given the intuition about the Algorithm~\ref{alg:DASW}, we first note the existence of a deceptively almost-sure winning state that is \emph{not} a non-deceptive almost-sure winning state for P1. Clearly, to win from such a state, P1 \textit{must} use action deception.

\begin{theorem}
    \label{thm:DASW-not-empty}
    
    The deceptive almost-sure winning region may contain a state $v = (s, \gamma)$ such that the state $s \in S$ is \emph{not} a non-deceptive almost-sure winning state, \ie $s \notin \win_1(A_1)$. 
\end{theorem}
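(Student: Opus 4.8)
The plan is to prove this existence claim by exhibiting a concrete instance, namely the running example of \ex{ex:1} together with its hypergame in \fig{fig:hgame}. It suffices to identify a single hypergame state $v = (s, \gamma)$ with $s \notin \win_1(A_1)$ that nonetheless lies in $\dawin_1$. I would take $v = (s_2, \gamma_1)$, where $\gamma_1 = \game(\{a_2\})$ encodes P2's initial misperception, which Part E of the running example already flags as deceptively almost-sure winning.

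First I would record that $s_2 \notin \win_1(A_1)$: in the \emph{true} game of \ex{ex:1}, the non-deceptive winning region is $\win_1(A_1) = \{s_0, s_1\}$, so $s_2 \in \win_2(A_1)$, confirming that $s_2$ is not a non-deceptive almost-sure winning state. Next I would construct P1's deceptive strategy $\pi_1$ and argue it is deceptively almost-sure winning from $(s_2, \gamma_1)$. The strategy plays $a_1$ at $(s_1, \gamma_1)$, reaching the final state $(s_0, \gamma_2) \in \calF$, and, crucially, plays $a_2$ rather than $a_1$ at $(s_3, \gamma_1)$, so that P2's perception is never updated and remains $\gamma_1$ along the loop $\{(s_2, \gamma_1), (s_3, \gamma_1)\}$. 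By Corollary~\ref{cor:P2-SW-action-is-PP}, under this perception both $b_1$ and $b_2$ are perceptually permissive at $(s_2, \gamma_1)$, whereas $b_1$ is not permissive in the \emph{true} game. Under Assumption~\ref{assume:PPStrategy-Support}, P2 therefore assigns positive probability, say $p > 0$, to $b_1$ at $(s_2, \gamma_1)$, and $b_1$ transitions to $(s_1, \gamma_1)$, from which P1 wins.

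The heart of the argument is the probabilistic reachability claim. Any run compatible with $(\pi_1, \mu)$ from $(s_2, \gamma_1)$ either reaches $(s_1, \gamma_1)$, and hence $\calF$, or remains forever within $\{(s_2, \gamma_1), (s_3, \gamma_1)\}$, which forces P2 to select $b_2$ at every visit to $(s_2, \gamma_1)$. Since each such visit selects $b_2$ with probability $1-p < 1$, the probability of avoiding $b_1$ across the infinitely many returns to $(s_2, \gamma_1)$ is $\lim_{n \to \infty}(1-p)^n = 0$. Hence $Pr_{(s_2,\gamma_1)}^{\pi_1, \mu}(\calF) = 1$ for every admissible $\mu$, so $(s_2, \gamma_1) \in \dawin_1$ while $s_2 \notin \win_1(A_1)$, establishing the theorem.

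I expect the main obstacle to be formalizing this geometric-decay (Borel–Cantelli-style) argument cleanly: one must argue that on any run which fails to reach $\calF$ the state $(s_2, \gamma_1)$ is genuinely revisited infinitely often, so that the escape probabilities compound to a product that vanishes. The second point requiring care is emphasizing that the \emph{deceptive} choice of $a_2$ at $(s_3, \gamma_1)$ is exactly what preserves the misperception keeping $b_1$ permissive; had P1 instead revealed $a_1$ there, the perception would collapse to $\gamma_2 = \game(\{a_1, a_2\})$, in which $b_1$ is no longer permissive, and $s_2$ would become genuinely losing for P1 (as witnessed by the red state $(s_2, \gamma_2)$ in \fig{fig:hgame}).
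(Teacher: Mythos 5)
Your proof is correct and uses the same witness the paper does: the paper's entire proof of this theorem is ``See Example~1 (Part~E),'' which identifies $(s_2, \gamma_1)$ as deceptively almost-sure winning while $s_2 \notin \win_1(A_1) = \{s_0, s_1\}$. The only difference is that you certify $(s_2,\gamma_1) \in \dawin_1$ directly, via an explicit strategy and a geometric-decay argument on repeated visits to $(s_2,\gamma_1)$, whereas the paper traces Algorithm~\ref{alg:DASW} on the example and relies on Theorem~\ref{thm:DASW-sufficiency} for the probability-one guarantee; your self-contained verification is, if anything, the more complete write-up.
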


\begin{proof}[\textbf{Proof}]
    See Example~\ref{ex:5} (Part E). 
\end{proof}

Next, we establish the correctness of Algorithm~\ref{alg:DASW} by showing that from every state that is identified by the algorithm as a deceptive almost-sure winning state, we can construct a deceptive almost-sure winning strategy for P1 to ensure a visit to a final state with probability \emph{one}.

\begin{lemma}
    \label{lma:stay-in-strategy} 
    
    In the $i$-th iteration of Algorithm~\ref{alg:DASW}, for all states in $Z_i$, P1 has a strategy to restrict the game indefinitely within $Z_i$. 
\end{lemma}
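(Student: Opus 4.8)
The plan is to exhibit, for each $i$, a single memoryless P1 strategy that is \emph{closed} on $Z_i$, in the sense that it never lets the play leave $Z_i$ no matter which perceptually permissive strategy P2 uses. The whole argument rests on one structural property of $Z_i$: every P1-owned state of $Z_i$ has an action keeping the play in $Z_i$, and every P2-owned state of $Z_i$ sends \emph{all} of P2's perceptually permissive actions back into $Z_i$. Concisely, I want the closure property $Z_i \subseteq \dapre_1^1(Z_i) \cup \dapre_1^2(Z_i)$, where on P1 states $\dapre_1^1$ asks for an existential staying action and on P2 states $\dapre_1^2$ asks that every action of $M(v)$ stays.

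I would establish this closure property by cases on how $Z_i$ is produced. For $i \ge 1$ we have $Z_i = \textsc{Safe-1}(V \setminus C_{i-1})$; since \textsc{Safe-1} computes the decreasing chain $Y_{j+1} = Y_j \cap (\dapre_1^1(Y_j) \cup \dapre_1^2(Y_j))$ over the finite lattice $2^V$, it terminates at a set $Z_i$ satisfying $Z_i = Z_i \cap (\dapre_1^1(Z_i) \cup \dapre_1^2(Z_i))$, which is exactly the closure property. For the base case $i = 0$, where $Z_0 = \win_1(A_1) \times \Gamma$, I would read the property off Zielonka's algorithm (Algorithm~\ref{alg:zielonka}): by Eq.~(\ref{eq:pre-exists}) every $s \in \win_1(A_1) \cap S_1$ has an $a \in A_1$ with $T(s,a) \in \win_1(A_1)$, and by Eq.~(\ref{eq:pre-forall}) every $s \in \win_1(A_1) \cap S_2$ sends all of $A_2 \supseteq M(v)$ into $\win_1(A_1)$. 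Lifting these through $\Delta(v,a) = (T(s,a), E(\gamma,a))$ and recalling $Z_0 = \win_1(A_1) \times \Gamma$ yields $Z_0 \subseteq \dapre_1^1(Z_0) \cup \dapre_1^2(Z_0)$.

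Given the closure property, I would define the staying strategy $\pi_1$ by picking, for each $v \in Z_i \cap V_1$, one witnessing action $a_v \in A_1$ with $\Delta(v, a_v) \in Z_i$ (and arbitrarily elsewhere), and then prove invariance by induction on the step index $k$ along any run $\nu = v_0 v_1 \cdots \in \mathsf{Outcomes}(v_0, \pi_1, \mu)$ with $v_0 \in Z_i$ and $\mu$ any perceptually permissive strategy. The inductive step splits on the owner of $v_k \in Z_i$: if $v_k \in V_1$ then $v_{k+1} = \Delta(v_k, a_{v_k}) \in Z_i$ by construction of $\pi_1$; if $v_k \in V_2$, then the realized action lies in $\supp(\mu(v_k)) \subseteq M(v_k)$ by Definition~\ref{def:perceptually-perm-strategy}, so $v_{k+1} \in Z_i$ by the closure property. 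Hence $\occ(\nu) \subseteq Z_i$ for every such run and every $\mu$, which is the claim.

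I expect the delicate point to be the P2-state half of the closure property: it hinges on $\dapre_1^2$ quantifying over precisely the set $M(v)$ of perceptually permissive actions available at $v = (s,\gamma)$ (defined relative to P2's perceived winning region $\win_2(X)$ with $\gamma = \game(X)$), and on the fact that any perceptually permissive $\mu$ can only realize actions in $M(v)$—so that the weaker containment $\supp(\mu(v)) \subseteq M(v)$, rather than the support equality of Assumption~\ref{assume:PPStrategy-Support}, already suffices for merely staying (the equality will only matter later, for the probability-one progress argument). The base case $i = 0$ is routine once the Zielonka characterization of $\win_1(A_1)$ is invoked, and the finiteness of $V$ guarantees termination of all the fixed-point computations involved.
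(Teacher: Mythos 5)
Your proof takes essentially the same route as the paper's: the paper's entire argument is your closure property $Z_i \subseteq \dapre_1^1(Z_i) \cup \dapre_1^2(Z_i)$, read off directly from Eqs.~(\ref{eq:dapre-11})--(\ref{eq:dapre-12}), with the run induction and the choice of witnessing actions left implicit; you add the useful observation that for $i \ge 1$ the closure is precisely the fixed-point condition of \textsc{Safe-1}. One caveat on your base case: the claim that every $s \in \win_1(A_1) \cap S_1$ has an action with $T(s,a) \in \win_1(A_1)$ holds only for states of rank at least $1$ in Algorithm~\ref{alg:zielonka}; states in $F$ itself enter Zielonka's $Z_0$ by fiat rather than through $\mathsf{Pre}_1$ or $\mathsf{Pre}_2$, and may have all successors outside $\win_1(A_1)$, so the lemma as literally stated can fail at $i=0$. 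This is harmless downstream --- Theorem~\ref{thm:DASW-sufficiency} only needs that from $Z_0$ P1 surely reaches $\calF$, which is Proposition~\ref{prop:asw-state-is-also-dasw}, and the paper's own proof silently skips the base case --- but your justification of it over-claims.
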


\begin{proof}
    For a P2 state $v \in V_2$, by Eq.~(\ref{eq:dapre-12}), we have that $\Delta(v, a) \in Z_i$ for any perceptually permissive action $a \in \supp(\mu(v))$. For a P1 state $v \in V_1$, by Eq.~(\ref{eq:dapre-11}), there exists an action $a \in A_1$ such that $\Delta(v, a) \in Z_i$. Thus, at any P1 state, P1 has a strategy to enforce a visit to $Z_i$ and, at a P2 state, any perceptually permissive action of P2 leads the game into $Z_i$.
\end{proof}

\begin{lemma}
    \label{lma:progressive-action} 
    
    Every state $v \in V$ that is newly added to $Z_{i+1}$ in the $i$-th iteration of Algorithm~\ref{alg:DASW} has an action leading into $Z_{i}$.
\end{lemma}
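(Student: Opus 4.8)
The plan is to read off the required edge from the greatest-fixed-point structure of the two \textsc{Safe} subroutines. Fix an outer iteration $i$ and a newly added state $v\in Z_{i+1}\setminus Z_i$. Since $Z_{i+1}=\textsc{Safe-1}(V\setminus C_i)\subseteq V\setminus C_i$ and $v\notin Z_i$, we have $v\in(V\setminus Z_i)\setminus C_i$: the state lies in $V\setminus Z_i$ but is \emph{not} in the P1-trap $C_i=\textsc{Safe-2}(V\setminus Z_i)$. My first step is to make precise how $v$ can leave $V\setminus Z_i$. Because $C_i$ is the maximal subset of $V\setminus Z_i$ closed under the \textsc{Safe-2} operator, obtained as the limit of the shrinking iteration $Y_{j+1}=Y_j\cap(\dapre_2^1(Y_j)\cup\dapre_2^2(Y_j))$ with $Y_0=V\setminus Z_i$, the state $v$ must be discarded at some finite stage $j^\ast$, i.e.\ $v\in Y_{j^\ast}$ but $v\notin\dapre_2^1(Y_{j^\ast})\cup\dapre_2^2(Y_{j^\ast})$.

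The second step unpacks this discard into the candidate edge. If $v\in V_1$, failure of $v\in\dapre_2^1(Y_{j^\ast})$ in Eq.~(\ref{eq:dapre-21}) yields an action $a\in A_1$ with $\Delta(v,a)\notin Y_{j^\ast}$; if $v\in V_2$, failure of $v\in\dapre_2^2(Y_{j^\ast})$ in Eq.~(\ref{eq:dapre-22}) yields a perceptually permissive action $b\in M(v)$ with $\Delta(v,b)\notin Y_{j^\ast}$. In either case $v$ has an outgoing edge, under a player-appropriate action, whose target exits $Y_{j^\ast}$, and it remains to show this target lies in $Z_i$. The clean case is $j^\ast=0$: since $Y_0=V\setminus Z_i$, any edge leaving $Y_0$ lands directly in $Z_i$, so the discard action is exactly the desired one-step edge. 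I would present this case first and calibrate it on the running example, where the P2 state $(s_2,\gamma_1)$ is discarded at the very first \textsc{Safe-2} stage and indeed sends its permissive action $b_1$ into $Z_0$.

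The main obstacle is the deeper discards $j^\ast>0$: here $Y_{j^\ast}\subsetneq V\setminus Z_i$, so leaving $Y_{j^\ast}$ only guarantees landing in $Z_i$ \emph{or} in a state discarded at an earlier \textsc{Safe-2} stage. The entire weight of the lemma falls on upgrading this to a genuine edge into $Z_i$, and my plan is a downward induction on the discard stage $j^\ast$ that couples the \textsc{Safe-2} elimination order with the \textsc{Safe-1} survival hypothesis $v\in\textsc{Safe-1}(V\setminus C_i)$ and the stay-edge supplied by Lemma~\ref{lma:stay-in-strategy}; the aim is to argue that a \textsc{Safe-1} survivor cannot have its escaping edge absorbed by the earlier-discarded states and must therefore point into $Z_i$. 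Pinning the target to $Z_i$ rather than to an intermediate non-trap state—concretely, the edge $(s_3,\gamma_1)\xrightarrow{a_2}(s_2,\gamma_1)$ in the running example, where the escape reaches a stage-$0$ discard rather than $Z_0$ itself—is the delicate point I expect to dominate the proof, and it is precisely where the interaction between the two greatest-fixed-point computations must be exploited, and, if needed, where the exact statement of the lemma may have to be sharpened to match what the \textsc{Safe} operators actually certify.
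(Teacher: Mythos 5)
Your stage-$0$ case is, in essence, the paper's \emph{entire} proof: the paper argues that every state in $V \setminus C_i$ has a transition leaving $V \setminus Z_i$, citing Eqs.~(\ref{eq:dapre-21}) and (\ref{eq:dapre-22}), and then reads ``leaving $V\setminus Z_i$'' as ``entering $Z_i$.'' That is precisely the leap you declined to make, and it is valid only when the state is discarded at the \emph{first} stage of \textsc{Safe-2} (your $j^\ast = 0$); a discard at stage $j^\ast > 0$ certifies only an edge leaving $Y_{j^\ast} \subsetneq V \setminus Z_i$, whose target may be an earlier-discarded state rather than a member of $Z_i$. So the obstacle you flagged is not a difficulty you failed to overcome --- it is a genuine gap in the paper's own proof, and your witness settles the matter: in the running example, $(s_3, \gamma_1)$ is newly added to $Z_1$ (by $\dapre_1^1$, exactly as Example Part~E records), yet its only actions lead to $(s_2, \gamma_2) \in C_0$ and to $(s_2, \gamma_1) \in Z_1 \setminus Z_0$, so it has \emph{no} action into $Z_0$. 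The lemma as literally stated is false, and no amount of bookkeeping over discard stages can rescue the one-step claim; your suggestion that the statement must be sharpened is not optional but necessary.

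What survives, and what Theorem~\ref{thm:DASW-sufficiency} actually uses, is the weaker multi-step claim: from each newly added $v \in Z_{i+1}$, the set $Z_i$ is reached with strictly positive probability. Your downward induction on $j^\ast$ is the right tool for this, with one ingredient your plan gestures at but does not name: the \textsc{Safe-1} fixed point itself. For $v \in Z_{i+1} \cap V_2$, the fixed-point condition places $v$ in $\dapre_1^2(Z_{i+1})$, so \emph{every} permissive action stays inside $Z_{i+1}$; the escaping action $b \in M(v)$ therefore lands in $Z_{i+1} \setminus Y_{j^\ast} \subseteq Z_i \cup \bigl(\bigcup_{j < j^\ast}(Y_j \setminus Y_{j+1}) \cap Z_{i+1}\bigr)$, i.e., in $Z_i$ or in a $Z_{i+1}$-state discarded at an earlier stage, and Assumption~\ref{assume:PPStrategy-Support} supplies the positive probability --- this is exactly how $(s_3,\gamma_1) \xrightarrow{a_2} (s_2,\gamma_1) \xrightarrow{b_1} (s_1,\gamma_1)$ works in the example. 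The residual delicacy sits with $v \in Z_{i+1} \cap V_1$: the escaping action certified by Eq.~(\ref{eq:dapre-21}) may exit $Z_{i+1}$ altogether (into a \textsc{Safe-1}-discarded state), so P1 cannot simply play it, and one needs a further argument --- e.g., that the set of $Z_{i+1}$-states from which P1 has no $Z_{i+1}$-internal progressing move would itself have survived \textsc{Safe-2}, contradicting its discard. Neither you nor the paper settles that case; it is where the honest remaining work on repairing this lemma lies.
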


\begin{proof}
    From the sub-routine \textsc{Safe-1} in Algorithm~\ref{alg:DASW}, we know that every new state added to $Z_{i+1}$ must be a state in $V \setminus C_i$. But every state in $V \setminus C_i$ has at least one transition leading outside $V \setminus Z_i$. This follows from the fact that the sub-routine \textsc{Safe-2} includes only those P1 states in $C_i$ for which there exists $a \in A_1$ such that $\Delta(v, a) \in V \setminus Z_i$, by Eq.~(\ref{eq:dapre-21}). And it includes only those P2 states in $C_i$ for which $\Delta(v, a) \in V \setminus Z_i$ holds for any $a \in \supp(\mu(v))$, by Eq.~(\ref{eq:dapre-22}). Thus, whenever a state is \emph{not} included in $C_i$ (\ie it belongs to $V \setminus C_i$), there exists an action for P1 or a perceptually permissive action for P2 which leads the game outside $V \setminus Z_i$ (\ie into $Z_i$). 
\end{proof}

The following observation follows immediately from Lemma~\ref{lma:progressive-action}.

\begin{corollary}
    \label{lma:}
    For every $i \geq 0$, we have $Z_i \subseteq Z_{i+1}$.
\end{corollary}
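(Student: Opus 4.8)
The plan is to show that $Z_i$ survives the computation $Z_{i+1} = \textsc{Safe-1}(V \setminus C_i)$ unchanged, so that every state of $Z_i$ remains in $Z_{i+1}$. First I would establish that $Z_i$ is disjoint from $C_i$, hence $Z_i \subseteq V \setminus C_i$. This is immediate from the algorithm: the sub-routine $\textsc{Safe-2}$ is initialized with $Y_0 = V \setminus Z_i$ and only shrinks this set through the intersection $Y_{j+1} = Y_j \cap (W_1 \cup W_2)$, so its output satisfies $C_i \subseteq V \setminus Z_i$, i.e., $Z_i \cap C_i = \emptyset$.

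Second, I would argue that $Z_i$ is a controlled-invariant set for P1 in the sense enforced by $\textsc{Safe-1}$. By Lemma~\ref{lma:stay-in-strategy}, from every state of $Z_i$ P1 can keep the game inside $Z_i$ indefinitely; unpacking this, every P1 state $v \in Z_i \cap V_1$ admits an action $a \in A_1$ with $\Delta(v,a) \in Z_i$ (so $v \in \dapre_1^1(Z_i)$), and every P2 state $v \in Z_i \cap V_2$ sends every perceptually permissive action $a \in M(v)$ into $Z_i$ (so $v \in \dapre_1^2(Z_i)$). Thus the successors of every $Z_i$-state that P1 can or is forced to follow remain within $Z_i$ itself.

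Third, I would run the $\textsc{Safe-1}(V \setminus C_i)$ iteration and show by induction on $j$ that $Z_i \subseteq Y_j$. The base case $Z_i \subseteq Y_0 = V \setminus C_i$ is exactly step one. For the inductive step, assume $Z_i \subseteq Y_j$. Using step two together with the monotonicity of $\dapre_1^1$ and $\dapre_1^2$ in their argument, each P1 state of $Z_i$ lies in $W_1 = \dapre_1^1(Y_j)$ and each P2 state of $Z_i$ lies in $W_2 = \dapre_1^2(Y_j)$, since their relevant successors stay in $Z_i \subseteq Y_j$. Combined with $Z_i \subseteq Y_j$, this gives $Z_i \subseteq Y_{j+1} = Y_j \cap (W_1 \cup W_2)$. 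At the fixed point $Y_j = Z_{i+1}$, we obtain $Z_i \subseteq Z_{i+1}$, as required.

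The computation is routine once the invariant $Z_i \subseteq Y_j$ is in place, so the only delicate point is the translation of Lemma~\ref{lma:stay-in-strategy} into the membership statements $Z_i \cap V_1 \subseteq \dapre_1^1(Y_j)$ and $Z_i \cap V_2 \subseteq \dapre_1^2(Y_j)$. The key observation making this work is that the successors of $Z_i$-states stay inside $Z_i$ rather than merely inside the larger set $Y_j$, which is precisely what controlled-invariance of $Z_i$ supplies; I expect this verification to be the main, though modest, obstacle.
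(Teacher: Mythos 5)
Your proof is correct, but it takes a different and substantially more explicit route than the paper, which offers no argument at all beyond the remark that the corollary ``follows immediately from Lemma~\ref{lma:progressive-action}.'' That attribution is in fact somewhat awkward: Lemma~\ref{lma:progressive-action} only concerns states \emph{newly added} to $Z_{i+1}$ and says nothing about why the old states of $Z_i$ are retained, so it cannot by itself yield $Z_i \subseteq Z_{i+1}$. Your argument supplies exactly the missing content. You correctly observe that $C_i \subseteq V \setminus Z_i$ because $\textsc{Safe-2}$ only shrinks its initial set, hence $Z_i \subseteq Y_0 = V \setminus C_i$ at the start of $\textsc{Safe-1}$; you then use the controlled-invariance of $Z_i$ (which is the content of Lemma~\ref{lma:stay-in-strategy}: every P1 state of $Z_i$ lies in $\dapre_1^1(Z_i)$ and every P2 state of $Z_i$ lies in $\dapre_1^2(Z_i)$) together with the monotonicity of $\dapre_1^1$ and $\dapre_1^2$ to propagate the invariant $Z_i \subseteq Y_j$ through every iteration of $\textsc{Safe-1}$, giving $Z_i \subseteq Z_{i+1}$ at the fixed point. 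This is the right decomposition---the key lemma is \lma{lma:stay-in-strategy}, not \lma{lma:progressive-action}---and what your version buys is a proof that actually tracks the two nested fixed-point computations in Algorithm~\ref{alg:DASW}, whereas the paper's one-line justification leaves the reader to reconstruct precisely the invariance argument you spelled out. The only mild caveat is that Lemma~\ref{lma:stay-in-strategy}'s own proof in the paper is terse and essentially asserts the $\dapre$ memberships for $Z_i$; for $i \geq 1$ these follow from the fixed-point condition $Y = Y \cap (\dapre_1^1(Y) \cup \dapre_1^2(Y))$ of $\textsc{Safe-1}$, and for $i = 0$ from the attractor structure of $\win_1(A_1)$, so your reliance on it is sound.
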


From \lma{lma:progressive-action}, it is easy to see that P1 has a strategy to reach $Z_i$ from a state added to $Z_{i+1}$ in one-step. However, this is not true for P2. From a P2 state in $Z_{i+1}$, there exists a positive probability to reach $Z_i$ because of Assumption~\ref{assume:PPStrategy-Support}. In the next theorem, we prove a stronger statement which states that from every state in $Z_{i+1}$, P1 can not only reach $Z_i$ with positive probability, but with probability one.

\begin{theorem}
    \label{thm:DASW-sufficiency} 
    
    From every deceptively almost-sure winning state $s \in \dawin_1$, P1 has a deceptively almost-sure winning strategy. 
\end{theorem}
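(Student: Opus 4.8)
The plan is to exhibit one explicit memoryless strategy $\pi_1^\ast$ and show it forces a visit to $\calF$ with probability one against \emph{every} perceptually permissive strategy $\mu$ of P2 satisfying Assumption~\ref{assume:PPStrategy-Support}. First I would assign to each state $v \in \dawin_1 = Z_K$ a rank $r(v) = \min\{i \mid v \in Z_i\}$, which is well defined since $Z_0 \subseteq Z_1 \subseteq \cdots$ by Corollary~\ref{lma:}. Then I define $\pi_1^\ast$ by cases: at a P1 state $v$ with $r(v) \geq 1$, play the progressive action into $Z_{r(v)-1}$ guaranteed by Lemma~\ref{lma:progressive-action}; at a state of rank $0$ (i.e.\ in $Z_0 = \win_1(A_1) \times \Gamma$), play the lifted non-deceptive sure-winning strategy supplied by Proposition~\ref{prop:asw-state-is-also-dasw}.

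Next I would establish two invariants of the Markov chain induced by the profile $(\pi_1^\ast, \mu)$. The first is that the rank is \emph{non-increasing}: at a P1 state of rank $j$ the progressive action lands in $Z_{j-1}$, while at a P2 state of rank $j$ every permissive action stays in $Z_j$, because $v \in Z_j = \textsc{Safe-1}(V \setminus C_{j-1})$ lies in the greatest fixed point of $\textsc{Safe-1}$ and hence in $\dapre_1^2(Z_j)$ — this is precisely the stay-in property of Lemma~\ref{lma:stay-in-strategy}. The second is a \emph{progress} bound: from any state of rank $j \geq 1$ there is a single transition into $Z_{j-1}$ taken with probability at least some $p_{\min} > 0$. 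At a P1 state this probability is $1$; at a P2 state Lemma~\ref{lma:progressive-action} provides a permissive action $a$ with $\Delta(v,a) \in Z_{j-1}$, and Assumption~\ref{assume:PPStrategy-Support} gives $\mu(v)(a) > 0$. Since $V$ is finite and each $M(v)$ carries full support, I can take $p_{\min} = \min_{v,a} \mu(v)(a) > 0$ uniformly.

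From these two facts the probability-one claim follows by the standard finite-state argument. The rank is a non-increasing, integer-valued process bounded below by $0$ that strictly decreases at each step with probability at least $p_{\min}$ whenever it is positive; hence the time spent at any fixed rank is stochastically dominated by a geometric random variable with parameter $p_{\min}$, and as there are at most $K$ distinct ranks, rank $0$ is reached in almost-surely finite time. Thus $Z_0$ is reached with probability one, after which the lifted sure-winning strategy of Proposition~\ref{prop:asw-state-is-also-dasw} reaches $\calF$ with probability one; composing the two phases yields $Pr_v^{\pi_1^\ast, \mu}(\calF) = 1$ for every $v \in \dawin_1$, which is the claim.

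The main obstacle is exactly the upgrade from positive-probability progress to probability-one reachability: at P2 states the descending transition is only guaranteed with probability $p_{\min} < 1$, so a single-step argument cannot conclude. The crux is to rule out a positive-probability trajectory that loops forever among P2 states of a fixed rank while always choosing non-progressive (yet rank-preserving) permissive actions; I would make this precise through the geometric-tail estimate above, leaning on the finiteness of $V$ to get the uniform bound $p_{\min}$. A secondary but essential check is that the non-progressive permissive moves can never \emph{raise} the rank, so that the level decomposition is genuinely monotone — this is where the fixed-point characterization of \textsc{Safe-1} (forcing $\Delta(v,b) \in Z_j$ for all $b \in M(v)$) and Assumption~\ref{assume:PPStrategy-Support} are both needed.
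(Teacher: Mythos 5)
Your proof is correct and follows essentially the same route as the paper's: it combines the stay-in property of Lemma~\ref{lma:stay-in-strategy} (rank never increases) with the positive-probability progress of Lemma~\ref{lma:progressive-action} under Assumption~\ref{assume:PPStrategy-Support}, and then upgrades to probability one over the finitely many levels $Z_0 \subseteq \cdots \subseteq Z_K$. The only difference is that you spell out the geometric-tail estimate and the explicit rank-based strategy, which the paper leaves implicit.
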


\begin{proof}
    The proof follows from \lma{lma:stay-in-strategy} and \lma{lma:progressive-action}. For any $v \in Z_i$, \lma{lma:stay-in-strategy} ensures that P1 has a strategy to stay within $Z_i$ indefinitely. In addition,  \lma{lma:progressive-action} ensures that the probability of reaching to a state $v' \in Z_{i-1}$ from $v$ is strictly positive. Therefore, given a run of infinite length, the probability of reaching $Z_{i-1}$ from $Z_i$ is one. By repeatedly applying the argument, it follows that the probability of reaching $Z_0$ from $Z_i$ is one.
\end{proof}

The deceptively almost-sure winning strategy can be constructed based on the proof of Theorem~\ref{thm:DASW-sufficiency}. Specifically, any randomized strategy $\pi_1 \in \Pi_1$ such that $\supp(\pi_1(v)) = \{a \in A_1 \mid v' = \Delta(v, a) \text{ and } v' \in Z_{i-1}\}$ for any $v \in V_1$, given that $i \geq 1$ is the smallest integer such that $v \in Z_i$, is a deceptive almost-sure winning strategy of P1.


From Theorems~\ref{thm:DASW-not-empty} and \ref{thm:DASW-sufficiency}, we conclude that a deceptively almost-sure winning strategy of P1 is more powerful than the almost-sure winning strategy.

\section{Experiment}
    \label{sec:experiment}

In this section, we illustrate the advantages of using action deception using a simplified version of capture-the-flag game \cite{Svabensky2021} played over a $5 \times 5$ gridworld, like the one shown in Figure~\ref{fig:capture-the-flag}. The gridworld is partitioned into P1 (blue) and P2 (red) territories. P1's objective in the game is to capture both the flags from P2's territory, while that of P2 is to prevent P1 from capturing the flags. We restrict P2 to move only within his own territory. Under this setting, we are interested to determine the number of game states from which P1 has a deceptive sure (almost-sure) winning strategy and compare it with the sizes of the non-deceptive sure (almost-sure) winning regions. We introduce the following notion of \emph{value of deception}, denoted by $\mathsf{VoD}$ to quantify the advantage gained by P1 by using deception. 
%
\begin{align}
    \label{eq:vod}
    \mathsf{VoD} = 
    \begin{cases}
        \frac{|\dswin_1\downharpoonright_S| - |\win_1(A_1)|}{|\win_2(A_1)|} & \text{under deceptive sure winning condition} \\
        \frac{|\dawin_1\downharpoonright_S| - |\win_1(A_1)|}{|\win_2(A_1)|} & \text{under deceptive almost-sure winning condition} \\
        0 & \text{if } |\win_2(A_1)| = 0 \\
    \end{cases}
\end{align}
To understand Eq.~\eqref{eq:vod}, first, recall that P1 can win from any state in $\win_1(A_1)$ regardless of whether she uses deception or not. Thus, the benefit of deception can be quantified by counting the number of P2's winning states in the game with complete, symmetric information (\ie~in $\win_2(A_1)$) that P1 can win from by using deception. Notice that $\mathsf{VoD}$ takes a value between $0$ and $1$. $\mathsf{VoD} = 0$ represents the case when P1 gains no advantage by using deception. $\mathsf{VoD} = 1$ represents the case in which P1 gains maximum benefit that is possible by using deception, \ie P1 can leverage P2's misperception to win from all of P2's winning states in $\win_2(A_1)$. 


To demonstrate the applicability of our proposed approach to a broad range of reachability objectives, we specify P1's objective using a Syntactically Co-safe Linear Temporal Logic (scLTL) formula. scLTL is a subclass of Linear Temporal Logic (LTL) which can represent complex and temporally extended co-safety objectives. An overview of strategy synthesis with scLTL is provided in Appendix~\ref{appendix:ltl}. We consider the following two scLTL objectives for P1 in this experiment.

\begin{enumerate}
    \item P1 must capture both $\mathsf{FLAG}_1$ and $\mathsf{FLAG}_2$ in any order. 
    \begin{align}
    \label{eq:varphi1}
        \underbrace{\Eventually \mathsf{FLAG}_1}_{\text{Eventually capture $\mathsf{FLAG}_1$}} \land \underbrace{\Eventually \mathsf{FLAG}_2}_{\text{Eventually capture $\mathsf{FLAG}_2$}} 
    \end{align}

    \item P1 must first capture $\mathsf{FLAG}_1$ and then capture $\mathsf{FLAG}_2$. Until then, P1 must avoid colliding with P2.
    \begin{align}
    \label{eq:varphi2}
        \underbrace{(\neg\mathsf{FLAG}_2 \land \neg \mathsf{collide}) \until \mathsf{FLAG}_1}_{\text{don't collide or collect $\mathsf{FLAG}_2$ until $\mathsf{FLAG}_1$ is collected}} \land \underbrace{\neg \mathsf{collide} \until \mathsf{FLAG}_2}_{\text{don't collide until $\mathsf{FLAG}_2$ is collected}}
    \end{align}
\end{enumerate}

The dynamics of the capture-the-flag game are as follows. Both the players can move in 4 compass directions: \texttt{N, E, S, W}. P2 cannot enter any cell containing a wall or a fence, and \textit{presumes this to be the case for P1 as well}. However, initially unknown to P2, P1 has the following special actions: \texttt{JumpN, JumpE, JumpS, JumpW} and \texttt{Cut}. Using the \texttt{Jump} action P1 can jump over a wall in a free cell (\ie a cell not containing an obstacle, a wall or a fence) adjacent to the wall in the direction of the jump. Using the \texttt{Cut} action, P1 can convert a cell containing a fence into a free cell. Note that once a cell containing a fence becomes free, P2 can visit that cell.

\begin{figure}[ht]
    \centering
    \includegraphics[scale=0.40]{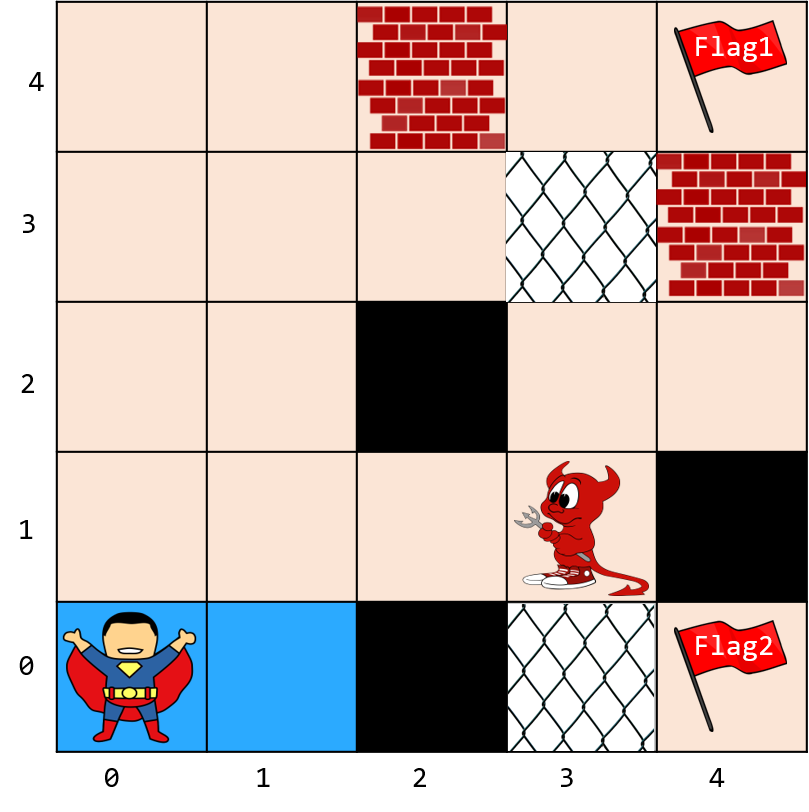}
    \caption{An example of capture-the-flag game between P1 (superman) and P2 (devil) played over a $5 \times 5$ grid world.}
    \label{fig:capture-the-flag}
\end{figure}

Given the dynamics, we construct game and hypergame graphs. We define the game state (denoted by $s$) and hypergame state (denoted by $v$) as follows: 
\begin{align*}
    & s: \big((\mathtt{p1.x, p1.y, p2.x, p2.y}), (\mathtt{f1.cut}, \mathtt{f2.cut}), \mathtt{turn}, \mathtt{q}\big) \\
    & v: \big((\mathtt{p1.x, p1.y, p2.x, p2.y}), (\mathtt{f1.cut}, \mathtt{f2.cut}), \mathtt{turn}, \mathtt{q, i}\big)
\end{align*}
where 
\begin{itemize}
    \item \texttt{p1.x, p2.y, p1.x, p2.y} represents the position of P1 and P2 in gridworld;
    \item \texttt{f1.cut, f2.cut} represents whether fence 1 and fence 2 (cells $(0, 3)$ and $(3, 3)$ in Figure~\ref{fig:capture-the-flag}) are cut or intact;
    \item \texttt{turn} represents whether it is P1's or P2's turn at that state;
    \item \texttt{q} is the specification DFA state that encodes the progress P1 has made towards satisfying her scLTL objective (see Appendix~\ref{appendix:ltl} for more details);
    \item \texttt{i} is a state of inference graph that captures P2's current perception of P1's action set.
\end{itemize}

\begin{table}[b!] 
\renewcommand\arraystretch{1.65}
\begin{tabularx}{\columnwidth}{@{}c|c|c|c|c|c|c|c@{}}
\toprule
&$|V|$&$|E|$&$|F|$&\begin{tabular}[c]{@{}c@{}}$|\mathsf{DSWin}_1|$ or \\ $|\dawin_1|$\end{tabular}&\begin{tabular}[c]{@{}c@{}}$|\mathsf{DSWin}_1\downharpoonright_S|$ or \\ $|\dawin_1\downharpoonright_S|$\end{tabular}&$\win_2$&$\mathsf{VoD}$\\
\midrule
SW($\game$) & 6388 & 15016 & 1686 & - & 6133 & 255 & -\\
DSW($\hgame$) & 9423 & 22181 & 2238 & 9031 & 6133 & 255 & 0 \\
DASW($\hgame$)    & 9423 & 22181 & 2238 & \textbf{9395} & \textbf{6370} & \textbf{18} & \textbf{0.9294}\\
\bottomrule
\end{tabularx}
\caption{Comparison of deceptive and non-deceptive winning states under sure and almost-sure winning condition for P1's objective $\varphi_1 = \Eventually \mathsf{FLAG}_1 \land \Eventually \mathsf{FLAG}_2$.}
\label{tbl:results-1}
\end{table}

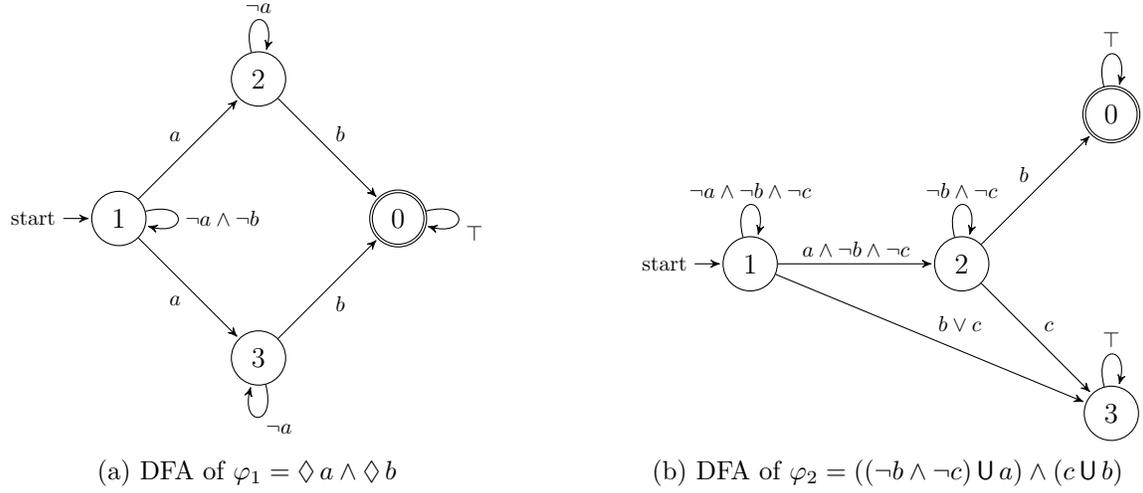
\begin{figure}
     \centering
     \subfloat[DFA of $\varphi_1 = \Eventually a \land \Eventually b$\label{fig:arena-p1}]{%
      \begin{tikzpicture}[->,>=stealth',shorten >=1pt,auto,node distance=3.5cm, scale = 0.75,transform shape]
    
      \node[state, initial left] (1) []             {\Large $1$};
      \node[state] (2) [above right of=1]             {\Large $2$};
      \node[state] (3) [below right of=1]   {\Large $3$};
      \node[state,accepting] (0) [below right of=2]   {\Large $0$};
      
      \path (1) edge   node {$a$} (2)
            (2) edge   node {$b$} (0)
            (1) edge   node[below left] {$a$} (3)
            (3) edge   node[below right] {$b$} (0)
            
            (0) edge[loop right]   node[below right] {$\top$} (0)
            (3) edge[loop below]   node[below right] {$\neg a$} (0)
            (2) edge[loop above]   node[above] {$\neg a$} (2)
            (1) edge[loop right]   node[right] {$\neg a \land \neg b$} (1)
            ;
            
    \end{tikzpicture}
    }\hfill
    \subfloat[DFA of $\varphi_2 = ((\neg b \land \neg c) \until a) \land (c \until b)$\label{fig:arena-p2}]{%
      \begin{tikzpicture}[->,>=stealth',shorten >=1pt,auto,node distance=3.75cm, scale = 0.75,transform shape]
    
      \node[state, initial left] (1) []             {\Large $1$};
      \node[state] (2) [right of=1]             {\Large $2$};
      \node[state] (3) [below right of=2]   {\Large $3$};
      \node[state,accepting] (0) [above right of=2]   {\Large $0$};
      
      \path (1) edge[loop above]   node {$\neg a \land \neg b \land \neg c$} (1)
            (1) edge   node {$a \land \neg b \land \neg c$} (2)
            (1) edge   node {$b \lor c$} (3)
            (2) edge[loop above]   node {$\neg b \land \neg c$} (2)
            (2) edge   node {$b$} (0)
            (2) edge   node {$c$} (3)
            (0) edge[loop above]   node {$\top$} (0)
            (3) edge[loop above]   node {$\top$} (3)
            ;
            
    \end{tikzpicture}
    }
        \caption{The sub-figure (a) shows the DFA equivalent to the scLTL formula given in Equation~\ref{eq:varphi1} and sub-figure (b) shows the DFA equivalent to scLTL formula in Equation~\ref{eq:varphi2}. For brevity, we use $a = \mathsf{FLAG}_1$, $b = \mathsf{FLAG}_2$ and $c = \mathsf{collide}$ in the figure.}
        \label{fig:dfa}
\end{figure}

\begin{figure}
     \centering
     
      \begin{tikzpicture}[->,>=stealth',shorten >=1pt,auto,node distance=4.5cm, scale = 0.7,transform shape]
    
      \node[state, initial left] (1) []             {\Large $0$};
      \node[state] (2) [above right of=1]             {\Large $1$};
      \node[state] (3) [below right of=1]   {\Large $2$};
      \node[state,accepting] (0) [below right of=2]   {\Large $3$};
      
      \path (1) edge   node {$\mathsf{Cut}$} (2)
            (2) edge   node {$\mathsf{JumpX}$} (0)
            (1) edge   node[below left] {$\mathsf{JumpX}$} (3)
            (3) edge   node[below right] {$\mathsf{Cut}$} (0)
            
            (0) edge[loop right]   node[below right] {$\top$} (0)
            (3) edge[loop below]   node[below right] {$\neg \mathsf{JumpX}$} (3)
            (2) edge[loop above]   node[above] {$\neg \mathsf{Cut}$} (2)
            (1) edge[loop right]   node[right] {$\neg \mathsf{Cut} \land \neg \mathsf{JumpX}$} (1)
            ;
            
    \end{tikzpicture}
    \caption{Inference graph of P2. The edge label $\mathsf{JumpX}$ stands for any of jump action $\mathsf{JumpN, JumpE, JumpS}$ and $\mathsf{JumpW}$. }
        \label{fig:igraph}
\end{figure}
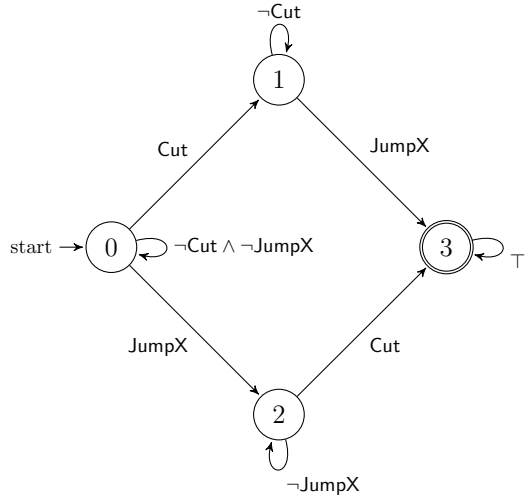

The edges of game graph follow from Definition~\ref{def:game-on-graph} and the game product construction is defined in Appendix~\ref{appendix:ltl}. Given the game graph, the edges of hypergame graph follow from Definition~\ref{def:hgame-on-graph}. A game or hypergame state is marked as a final state whenever \texttt{q} is a final state in the specification DFA. Figure~\ref{fig:dfa} shows the DFAs corresponding to scLTL formulas in Equations~\ref{eq:varphi1} and \ref{eq:varphi2}. In the figure, the final states of DFA are shown with two concentric circles. The inference graph, which captures the evolution of perception of P2, is shown in Figure~\ref{fig:igraph}. The mapping of states of the inference graph to P2's perception of P1's action set is given as follows:
\begin{align*}
    & \mathtt{0: N, E, S, W}, \\
    & \mathtt{1: N, E, S, W, Cut}, \\
    & \mathtt{2: N, E, S, W, JumpN, JumpE, JumpS, JumpW}, \\
    & \mathtt{3: N, E, S, W, JumpN, JumpE, JumpS, JumpW, Cut},
\end{align*}

P2's inference graph state transitions from state $0 \rightarrow 1$ when P1 uses \texttt{Cut} action, from state $0 \rightarrow 2$ when P1 uses any of the jump actions and from $1 \rightarrow 3$ and $2 \rightarrow 3$ when P1 uses any of the jump actions and cut action, respectively. It is noted that the hypergame states in which P2's inference graph state is $3$ corresponds to P2 having complete, symmetric information. That is, if the inference graph had only state $3$ in it, the resulting hypergame graph would coincide with the game with perfect information.

\begin{table}[t!] 
\renewcommand\arraystretch{1.65}
\begin{tabularx}{\columnwidth}{@{}c|c|c|c|c|c|c|c@{}}
\toprule
&$|V|$&$|E|$&$|F|$&\begin{tabular}[c]{@{}c@{}}$|\mathsf{DSWin}_1|$ or \\ $|\dawin_1|$\end{tabular}&\begin{tabular}[c]{@{}c@{}}$|\mathsf{DSWin}_1\downharpoonright_S|$ or \\ $|\dawin_1\downharpoonright_S|$\end{tabular}&$\win_2$&$\mathsf{VoD}$\\
\midrule
SW($\game$) & 4880 & 11449 & 1686 & - & 4724 & 156 & - \\
DSW($\hgame$) & 6965 & 16372 & 2238 & 6734 & 4724 & 156 & 0 \\
DASW($\hgame$)    & 6965 & 16372 & 2238 & \textbf{6947} & \textbf{4868} & \textbf{12} & \textbf{0.9230}\\
\bottomrule
\end{tabularx}
\caption{Comparison of deceptive and non-deceptive winning states under sure and almost-sure winning condition for P1's objective $\varphi_2 = ((\neg\mathsf{FLAG}_2 \land \neg \mathsf{collide}) \until a) \land (\mathsf{collide} \until \mathsf{FLAG}_2)$.}
\label{tbl:results-2}
\end{table}

The result of applying our algorithms on the game and hypergame graph for objective $\varphi_1 = \Eventually \mathsf{FLAG}_1 \land \Eventually \mathsf{FLAG}_2$ is tabulated in Table~\ref{tbl:results-1} and that for objective $\varphi_2 = ((\neg\mathsf{FLAG}_2 \land \neg \mathsf{collide}) \until \mathsf{FLAG}_1) \land (\mathsf{collide} \until \mathsf{FLAG}_2)$ is tabulated in Table~\ref{tbl:results-2}. As expected, for both objectives we observe that the number of deceptive sure winning states is equal to the number of (non-deceptive) sure winning states.

However, under the deceptive \emph{almost}-sure winning condition, we observe that P1 can win from $9395$ out of $9423$ hypergame states. That is, P1 has a deceptive almost-sure winning strategy from $6370$ out of $6388$ game states, which is $6370 - 6133 = 237$ more states than the case when deception is \emph{not} used. This results in $\mathsf{VoD} = 0.9294$. Similarly, for the second objective, where P1 has must capture flags in certain order and ensure that certain safety constraints are also satisfied, we observe that P1 can win from $6947$ out of $6965$ hypergame states. That is, she has a deceptive almost-sure winning strategy from $4868$ out of $4880$ game states which is $4868 - 4724 = 144$ more states than the number of states when deceptive mechanism is not used, thereby, resulting in $\mathsf{VoD} = 0.9230$.

\section{Conclusion}
    \label{sec:conclusion}

In this paper, we have introduced a dynamic hypergame on graph model to represent a game with one-sided incomplete information of action sets. For this class of games, we introduced the notions of deceptive sure and almost-sure winning strategies of P1 and presented algorithms to synthesize them. 
We established two important results regarding the benefit of using deception. First, the use of action deception provides no benefit to P1 when the game is analyzed using sure winning condition, that is, when P1 can ensure to deceptively reach a set of goal states in a finite number of steps. This is because when players use deterministic strategies, P1 cannot be certain when P2 will make a mistake due to his misperception of P1's action set. Second, the use of action deception might be beneficial to P1 when the game is analyzed using almost-sure winning condition, that is, when P1 can ensure to deceptively reach a set of goal states with probability one, with an undetermined number of steps.
This is because when players use randomized strategies,   P2 is ensured to make mistakes with a positive probability. By cleverly designing the deceptive almost-sure strategy, we showed that P1 can be sure that P2 will \emph{almost-surely} make a mistake. 

This work opens several interesting directions for future research. First, our work, which considers one-shot games, can be extended to consider repeated games. The challenge in this extension lies in modeling the effect of revealing an action on future interactions and payoffs in the repeated interactions. Another natural extension is to consider games with stochastic dynamics in which P2 has perfect observation of state history, but not of the action history. In these games, P1 might be able to use private actions without actually revealing them because P2 may attribute her noisy observations of a certain state transition to a P1's action which is which is known to him. Another extension is to  investigate the applications of action deception in   security domain where the hidden actions can be a hidden security countermeasure to adversarial  attackers.

\section*{Acknowledgement}
This material is based upon work in part supported by the Defense Advanced Research Projects Agency (DARPA) under Agreement No. HR00111990015 and in part sponsored by the Army Research Office and Army Research Laboratory (ARL) and was accomplished under Grant Number W911NF-21-1-0114. The views and conclusions contained in this document are those of the authors and should not be interpreted as representing the official policies, either expressed or implied, of the Army Research Office, Army Research Laboratory (ARL) or the U.S. Government. The U.S. Government is authorized to reproduce and distribute reprints for Government purposes notwithstanding any copyright notation herein.

\appendix
\section{Syntactically Co-safe Linear Temporal Logic}
    \label{appendix:ltl}

Syntactically Co-safe Linear Temporal Logic (scLTL) is a subclass of Linear Temporal Logic (LTL) which can be used to represent complex and temporally extended co-safety objectives \cite{kupferman2001model}. Intuitively, a co-safety objective means that something `good' will eventually happen. Formally, an scLTL formula is defined inductively as follows:
\[
    \varphi := \top \mid \bot \mid p \mid \neg p \mid \varphi_1 \land \varphi_2 \mid \bigcirc \varphi \mid \varphi_1 {\until} \varphi_2,
\]
where $\top$ and $\bot$ are universally true and false, respectively, $p \in \calAP$ is an atomic proposition, and $\bigcirc$ is a temporal operator called the ``next'' operator. $\bigcirc \varphi$ is evaluated to be true if the formula $\varphi$ becomes true at the next time step. $\until$ is a temporal operator called the ``until'' operator. The formula $\varphi_1\until \varphi_2$ is true given that $\varphi_2$ will be true in some future time steps, and before that $\varphi_1$ holds true for every time step. The operator $\Eventually$ (read as eventually) is defined using the operator $\until$ as follows: $\Eventually \varphi = \top \until \varphi$. The formula $\Eventually \varphi$ is true if $\varphi$ becomes true in some future time.  Given an scLTL formula $\varphi$ and a word $w\in \Sigma^\omega$, if the word $w$ satisfies the formula $\varphi$, then we denote $w\models \varphi$. For details about the syntax and semantics of scLTL, the readers are referred to \cite{pnueli1989synthesis,kupferman2001model}.

An scLTL formula contains only $\Eventually$ and $\until$ temporal operators when written in a positive normal form (\ie the negation operator $\neg$ appears only in front of atomic propositions). The unique property of scLTL formulas is that a word satisfying an scLTL formula $\varphi$ only needs to have a \emph{good prefix}. That is, given a good prefix $w \in \Sigma^\ast$, the word $ww' \models \varphi$ for any $w'\in \Sigma^\omega$. The set of good prefixes can be compactly represented as the language accepted by a \ac{dfa} defined as follows:
\begin{definition}[Deterministic Finite Automaton]
Given an scLTL formula $\varphi$, the set of good prefixes of words corresponding to $\varphi$ is accepted by a \ac{dfa} \[{\cal A} = \langle Q, \Sigma, \delta, \iota, Q_F \rangle\] with the following components:
\begin{itemize}
    \item $Q$ is a finite set of states.
    \item $\Sigma = 2^{\calAP}$ is a finite set of symbols.
    \item $\delta \colon Q \times \Sigma \to Q$ is a deterministic transition function.
    \item $\iota \in Q$ is a unique initial state.
    \item $Q_F \subseteq Q$ is a set of final states.
\end{itemize}
\end{definition}

For an input word $w =w_0 w_1 \ldots \in \Sigma^\omega$, the \ac{dfa} generates a sequence of states $q_0q_1\ldots $ such that $q_0=\iota$ and $q_{i+1}= \delta(q_{i},w_{i})$ for any $i \geq 0$. The word $w$ is accepted by the \ac{dfa} if and only if there exists $k \ge 0$ such that $q_k \in Q_F$. The set of words accepted by the \ac{dfa} $\mathcal{A}$ is called \emph{its language}. We assume that the \ac{dfa} is complete. That is, for every state-action pair $(q, w_i)$, $\delta(q, w_i)$ is well-defined. An incomplete DFA can be made complete by adding a sink state $q_{\mathsf{sink}}$ such that $\delta(q_{\mathsf{sink}}, w_i)= q_{\mathsf{sink}}$ and directing all undefined transitions to the sink state $q_{\mathsf{sink}}$.

Given a specification \ac{dfa} corresponding to an scLTL specification $\varphi$, a reachability game with complete, symmetric information in Definition~\ref{def:game-on-graph} is constructed as a product of a game transition system and the \ac{dfa} \cite{baier2008principles}. A game transition system captures the dynamics of the interaction between P1 and P2, and is formally defined as follows:

\begin{definition}[Game Transition System]
    \label{def:game-tsys}
    
    A game transition system capturing the dynamics of the interaction between P1 and P2 is defined as the tuple,
    \[
        GTS = \langle \hat{S}, A_1 \cup A_2, \Delta, \mathcal{AP}, L \rangle
    \]
    with the following components:
    \begin{itemize}
        \item $\hat{S} = \hat{S}_1 \cup \hat{S}_2$ is the set of game transition system states partitioned into P1 and P2 states,
        
        \item $A_1 \cup A_2$ are P1 and P2 actions,
        
        \item $\Delta: \hat{S} \times (A_1 \cup A_2) \rightarrow \hat{S}$ is a deterministic transition function,
        
        \item $\cal AP$ is the set of atomic propositions,
        
        \item $L: \hat{S} \rightarrow 2^{\cal AP}$ is a labeling function which maps every state in $\hat{S}$ to a set of atomic proposition which are true in that state.
    \end{itemize}
\end{definition}

Given a game transition system $GTS$ and a \ac{dfa} $\mathcal{A} = \langle Q, \Sigma=2^{\cal AP}, \delta, \iota, Q_F \rangle$, the components of the reachability game  $\game(A_1) = \langle S, A_1 \cup A_2, T, F\rangle$ are defined as follows: 
\begin{itemize}
    \item $S = \hat{S} \times Q$, where the P1 and P2 states are defined as $S_1 = \hat{S}_1 \times Q$ and $S_2 = \hat{S}_2 \times Q$,
    
    \item Given two states $s = (\hat{s}, q)$, $s' = (\hat{s}', q')$ and an action $a \in A_1 \cup A_2$, we have $T(s, a) = s'$ if and only if $\Delta(\hat{s}, a) = \hat{s}'$ and $\delta(q, L(\hat{s}') = q'$,
    
    \item $F = \hat{S} \times Q_F$ is the set of final states in $\game(A_1)$. 
\end{itemize}
 
Intuitively, the product operation is defined such that a game-run visiting one of the final states in $F$ respects the dynamics of the interaction between P1 and P2 and also satisfies P1's scLTL objective $\varphi$.

    


\vskip 0.2in
\bibliography{action-deception}

\begin{thebibliography}{}

\bibitem[\protect\BCAY{Aslanyan, Nielson,\ \BBA\ Parker}{Aslanyan
  et~al.}{2016}]{aslanyan2016quantitative}
Aslanyan, Z., Nielson, F., \BBA\ Parker, D. \BBOP2016\BBCP.
\newblock \BBOQ Quantitative verification and synthesis of attack-defence
  scenarios\BBCQ\
\newblock In {\Bem 2016 IEEE 29th Computer Security Foundations Symposium
  (CSF)}, \BPGS\ 105--119. IEEE.

\bibitem[\protect\BCAY{Baier\ \BBA\ Katoen}{Baier\ \BBA\
  Katoen}{2008}]{baier2008principles}
Baier, C.\BBACOMMA\  \BBA\ Katoen, J.-P. \BBOP2008\BBCP.
\newblock {\Bem Principles of model checking}.
\newblock MIT press.

\bibitem[\protect\BCAY{Bennett}{Bennett}{1977}]{bennett1977toward}
Bennett, P.~G. \BBOP1977\BBCP.
\newblock \BBOQ Toward a theory of hypergames\BBCQ\
\newblock {\Bem Omega}, {\Bem 5\/}(6), 749--751.

\bibitem[\protect\BCAY{Bernet, Janin,\ \BBA\ Walukiewicz}{Bernet
  et~al.}{2002}]{bernet2002permissive}
Bernet, J., Janin, D., \BBA\ Walukiewicz, I. \BBOP2002\BBCP.
\newblock \BBOQ Permissive strategies: from parity games to safety games\BBCQ\
\newblock {\Bem RAIRO-Theoretical Informatics and Applications-Informatique
  Th{\'e}orique et Applications}, {\Bem 36\/}(3), 261--275.

\bibitem[\protect\BCAY{Brandt}{Brandt}{2003}]{brandt2003fundamental}
Brandt, F. \BBOP2003\BBCP.
\newblock {\Bem Fundamental aspects of privacy and deception in electronic
  auctions}.
\newblock Ph.D.\ thesis, Technische Universit{\"a}t M{\"u}nchen.

\bibitem[\protect\BCAY{Carroll\ \BBA\ Grosu}{Carroll\ \BBA\
  Grosu}{2011}]{carroll2011game}
Carroll, T.~E.\BBACOMMA\  \BBA\ Grosu, D. \BBOP2011\BBCP.
\newblock \BBOQ A game theoretic investigation of deception in network
  security\BBCQ\
\newblock {\Bem Security and Communication Networks}, {\Bem 4\/}(10),
  1162--1172.

\bibitem[\protect\BCAY{Chatterjee\ \BBA\ Henzinger}{Chatterjee\ \BBA\
  Henzinger}{2012}]{chatterjee2012survey}
Chatterjee, K.\BBACOMMA\  \BBA\ Henzinger, T.~A. \BBOP2012\BBCP.
\newblock \BBOQ A survey of stochastic $\omega$-regular games\BBCQ\
\newblock {\Bem Journal of Computer and System Sciences}, {\Bem 78\/}(2),
  394--413.

\bibitem[\protect\BCAY{Clarke~Jr, Grumberg, Kroening, Peled,\ \BBA\
  Veith}{Clarke~Jr et~al.}{2018}]{clarke2018model}
Clarke~Jr, E.~M., Grumberg, O., Kroening, D., Peled, D., \BBA\ Veith, H.
  \BBOP2018\BBCP.
\newblock {\Bem Model checking}.
\newblock MIT press.

\bibitem[\protect\BCAY{De~Alfaro, Henzinger,\ \BBA\ Kupferman}{De~Alfaro
  et~al.}{2007}]{deAlfaro2007concurrent}
De~Alfaro, L., Henzinger, T.~A., \BBA\ Kupferman, O. \BBOP2007\BBCP.
\newblock \BBOQ Concurrent reachability games\BBCQ\
\newblock {\Bem Theoretical Computer Science}, {\Bem 386\/}(3), 188--217.

\bibitem[\protect\BCAY{Ettinger\ \BBA\ Jehiel}{Ettinger\ \BBA\
  Jehiel}{2010}]{ettinger2010theory}
Ettinger, D.\BBACOMMA\  \BBA\ Jehiel, P. \BBOP2010\BBCP.
\newblock \BBOQ A theory of deception\BBCQ\
\newblock {\Bem American Economic Journal: Microeconomics}, {\Bem 2\/}(1),
  1--20.

\bibitem[\protect\BCAY{Fainekos, Girard, Kress-Gazit,\ \BBA\ Pappas}{Fainekos
  et~al.}{2009}]{fainekos2009temporal}
Fainekos, G.~E., Girard, A., Kress-Gazit, H., \BBA\ Pappas, G.~J.
  \BBOP2009\BBCP.
\newblock \BBOQ Temporal logic motion planning for dynamic robots\BBCQ\
\newblock {\Bem Automatica}, {\Bem 45\/}(2), 343--352.

\bibitem[\protect\BCAY{Gharesifard\ \BBA\ Cort{\'e}s}{Gharesifard\ \BBA\
  Cort{\'e}s}{2013}]{gharesifard2013stealthy}
Gharesifard, B.\BBACOMMA\  \BBA\ Cort{\'e}s, J. \BBOP2013\BBCP.
\newblock \BBOQ Stealthy deception in hypergames under informational
  asymmetry\BBCQ\
\newblock {\Bem IEEE Transactions on Systems, Man, and Cybernetics: Systems},
  {\Bem 44\/}(6), 785--795.

\bibitem[\protect\BCAY{Gradel\ \BBA\ Thomas}{Gradel\ \BBA\
  Thomas}{2002}]{gradel2002automata}
Gradel, E.\BBACOMMA\  \BBA\ Thomas, W. \BBOP2002\BBCP.
\newblock \BBOQ Automata, logics, and infinite games: a guide to current
  research\BBCQ.
\newblock Springer Science \& Business Media.

\bibitem[\protect\BCAY{Gutierrez, Bagchi, Mohammed,\ \BBA\ Avery}{Gutierrez
  et~al.}{2015}]{gutierrez2015modeling}
Gutierrez, C.~N., Bagchi, S., Mohammed, H., \BBA\ Avery, J. \BBOP2015\BBCP.
\newblock \BBOQ Modeling deception in information security as a hypergame--a
  primer\BBCQ\
\newblock In {\Bem Proceedings of the 16th Annual Information Security
  Symposium}, \BPG~41. CERIAS-Purdue University.

\bibitem[\protect\BCAY{Harsanyi}{Harsanyi}{1967}]{harsanyi1967games}
Harsanyi, J.~C. \BBOP1967\BBCP.
\newblock \BBOQ Games with incomplete information played by “bayesian”
  players, i--iii part i. the basic model\BBCQ\
\newblock {\Bem Management science}, {\Bem 14\/}(3), 159--182.

\bibitem[\protect\BCAY{Hespanha, Ateskan, Kizilocak, et~al.}{Hespanha
  et~al.}{2000}]{hespanha2000deception}
Hespanha, J.~P., Ateskan, Y.~S., Kizilocak, H., et~al. \BBOP2000\BBCP.
\newblock \BBOQ Deception in non-cooperative games with partial
  information\BBCQ\
\newblock In {\Bem Proceedings of the 2nd DARPA-JFACC Symposium on Advances in
  Enterprise Control}, \BPGS\ 1--9. Citeseer.

\bibitem[\protect\BCAY{Hipel, Fang,\ \BBA\ Kilgour}{Hipel
  et~al.}{2020}]{hipel2020graph}
Hipel, K.~W., Fang, L., \BBA\ Kilgour, D.~M. \BBOP2020\BBCP.
\newblock \BBOQ The graph model for conflict resolution: Reflections on three
  decades of development\BBCQ\
\newblock {\Bem Group Decision and Negotiation}, {\Bem 29\/}(1), 11--60.

\bibitem[\protect\BCAY{Jha, Sheyner,\ \BBA\ Wing}{Jha
  et~al.}{2002}]{jha2002two}
Jha, S., Sheyner, O., \BBA\ Wing, J. \BBOP2002\BBCP.
\newblock \BBOQ Two formal analyses of attack graphs\BBCQ\
\newblock In {\Bem Proceedings 15th IEEE Computer Security Foundations
  Workshop. CSFW-15}, \BPGS\ 49--63. IEEE.

\bibitem[\protect\BCAY{Kovach}{Kovach}{2016}]{kovach2016temporal}
Kovach, N.~S. \BBOP2016\BBCP.
\newblock \BBOQ A temporal framework for hypergame analysis of cyber physical
  systems in contested environments\BBCQ.

\bibitem[\protect\BCAY{Kovach\ \BBA\ Lamont}{Kovach\ \BBA\
  Lamont}{2019}]{kovach2019trust}
Kovach, N.~S.\BBACOMMA\  \BBA\ Lamont, G.~B. \BBOP2019\BBCP.
\newblock \BBOQ Trust and deception in hypergame theory\BBCQ\
\newblock In {\Bem 2019 IEEE National Aerospace and Electronics Conference
  (NAECON)}, \BPGS\ 262--268. IEEE.

\bibitem[\protect\BCAY{Kulkarni\ \BBA\ Fu}{Kulkarni\ \BBA\
  Fu}{2020}]{kulkarni2020synthesis}
Kulkarni, A.~N.\BBACOMMA\  \BBA\ Fu, J. \BBOP2020\BBCP.
\newblock \BBOQ Synthesis of deceptive strategies in reachability games with
  action misperception\BBCQ.

\bibitem[\protect\BCAY{Kulkarni, Luo, Leslie, Kamhoua,\ \BBA\ Fu}{Kulkarni
  et~al.}{2020}]{kulkarni2020deceptive}
Kulkarni, A.~N., Luo, H., Leslie, N.~O., Kamhoua, C.~A., \BBA\ Fu, J.
  \BBOP2020\BBCP.
\newblock \BBOQ Deceptive labeling: hypergames on graphs for stealthy
  deception\BBCQ\
\newblock {\Bem IEEE Control Systems Letters}, {\Bem 5\/}(3), 977--982.

\bibitem[\protect\BCAY{Kupferman\ \BBA\ Vardi}{Kupferman\ \BBA\
  Vardi}{2001}]{kupferman2001model}
Kupferman, O.\BBACOMMA\  \BBA\ Vardi, M.~Y. \BBOP2001\BBCP.
\newblock \BBOQ Model checking of safety properties\BBCQ\
\newblock {\Bem Formal Methods in System Design}, {\Bem 19\/}(3), 291--314.

\bibitem[\protect\BCAY{Levin}{Levin}{2002}]{levin2002games}
Levin, J. \BBOP2002\BBCP.
\newblock \BBOQ Games of incomplete information\BBCQ.

\bibitem[\protect\BCAY{Li, Ma, Kulkarni,\ \BBA\ Fu}{Li
  et~al.}{2020}]{li2020dynamic}
Li, L., Ma, H., Kulkarni, A.~N., \BBA\ Fu, J. \BBOP2020\BBCP.
\newblock \BBOQ Dynamic hypergames for synthesis of deceptive strategies with
  temporal logic objectives (under review)\BBCQ.

\bibitem[\protect\BCAY{Mazala}{Mazala}{2002}]{Mazala2002}
Mazala, R. \BBOP2002\BBCP.
\newblock {\Bem Infinite Games}, \BPGS\ 23--38.
\newblock Springer Berlin Heidelberg, Berlin, Heidelberg.

\bibitem[\protect\BCAY{McNaughton}{McNaughton}{1993}]{mcnaughton1993infinite}
McNaughton, R. \BBOP1993\BBCP.
\newblock \BBOQ Infinite games played on finite graphs\BBCQ\
\newblock {\Bem Annals of Pure and Applied Logic}, {\Bem 65\/}(2), 149--184.

\bibitem[\protect\BCAY{Morgenstern\ \BBA\ Von~Neumann}{Morgenstern\ \BBA\
  Von~Neumann}{1953}]{morgenstern1953theory}
Morgenstern, O.\BBACOMMA\  \BBA\ Von~Neumann, J. \BBOP1953\BBCP.
\newblock {\Bem Theory of games and economic behavior}.
\newblock Princeton university press.

\bibitem[\protect\BCAY{Pnueli\ \BBA\ Rosner}{Pnueli\ \BBA\
  Rosner}{1989}]{pnueli1989synthesis}
Pnueli, A.\BBACOMMA\  \BBA\ Rosner, R. \BBOP1989\BBCP.
\newblock \BBOQ On the synthesis of a reactive module\BBCQ\
\newblock In {\Bem Proceedings of the 16th ACM SIGPLAN-SIGACT symposium on
  Principles of programming languages}, \BPGS\ 179--190.

\bibitem[\protect\BCAY{Ramadge\ \BBA\ Wonham}{Ramadge\ \BBA\
  Wonham}{1989}]{ramadge1989control}
Ramadge, P.~J.\BBACOMMA\  \BBA\ Wonham, W.~M. \BBOP1989\BBCP.
\newblock \BBOQ The control of discrete event systems\BBCQ\
\newblock {\Bem Proceedings of the IEEE}, {\Bem 77\/}(1), 81--98.

\bibitem[\protect\BCAY{Rasmusen}{Rasmusen}{1989}]{rasmusen1989games}
Rasmusen, E. \BBOP1989\BBCP.
\newblock {\Bem Games and information: An introduction to game theory}.
\newblock Blackwell Oxford.

\bibitem[\protect\BCAY{Sasaki}{Sasaki}{2014}]{sasaki2014subjective}
Sasaki, Y. \BBOP2014\BBCP.
\newblock {\Bem Subjective rationalizability in hypergames}.
\newblock Hindawi Publishing Corporation.

\bibitem[\protect\BCAY{Sasaki\ \BBA\ Kijima}{Sasaki\ \BBA\
  Kijima}{2012}]{sasaki2012hypergames}
Sasaki, Y.\BBACOMMA\  \BBA\ Kijima, K. \BBOP2012\BBCP.
\newblock \BBOQ Hypergames and bayesian games: a theoretical comparison of the
  models of games with incomplete information\BBCQ\
\newblock {\Bem Journal of Systems Science and Complexity}, {\Bem 25\/}(4),
  720--735.

\bibitem[\protect\BCAY{Shiva, Roy,\ \BBA\ Dasgupta}{Shiva
  et~al.}{2010}]{shiva2010game}
Shiva, S., Roy, S., \BBA\ Dasgupta, D. \BBOP2010\BBCP.
\newblock \BBOQ Game theory for cyber security\BBCQ\
\newblock In {\Bem Proceedings of the Sixth Annual Workshop on Cyber Security
  and Information Intelligence Research}, \BPGS\ 1--4.

\bibitem[\protect\BCAY{Thing\ \BBA\ Wu}{Thing\ \BBA\
  Wu}{2016}]{thing2016autonomous}
Thing, V.~L.\BBACOMMA\  \BBA\ Wu, J. \BBOP2016\BBCP.
\newblock \BBOQ Autonomous vehicle security: A taxonomy of attacks and
  defences\BBCQ\
\newblock In {\Bem 2016 ieee international conference on internet of things
  (ithings) and ieee green computing and communications (greencom) and ieee
  cyber, physical and social computing (cpscom) and ieee smart data
  (smartdata)}, \BPGS\ 164--170. IEEE.

\bibitem[\protect\BCAY{Wang, Hipel,\ \BBA\ Fraser}{Wang
  et~al.}{1989}]{wang1989solution}
Wang, M., Hipel, K.~W., \BBA\ Fraser, N.~M. \BBOP1989\BBCP.
\newblock \BBOQ Solution concepts in hypergames\BBCQ\
\newblock {\Bem Applied Mathematics and Computation}, {\Bem 34\/}(3), 147--171.

\bibitem[\protect\BCAY{Zhuang, Bier,\ \BBA\ Alagoz}{Zhuang
  et~al.}{2010}]{zhuang2010modeling}
Zhuang, J., Bier, V.~M., \BBA\ Alagoz, O. \BBOP2010\BBCP.
\newblock \BBOQ Modeling secrecy and deception in a multiple-period
  attacker--defender signaling game\BBCQ\
\newblock {\Bem European Journal of Operational Research}, {\Bem 203\/}(2),
  409--418.

\bibitem[\protect\BCAY{Zielonka}{Zielonka}{1998}]{zielonka1998infinite}
Zielonka, W. \BBOP1998\BBCP.
\newblock \BBOQ Infinite games on finitely coloured graphs with applications to
  automata on infinite trees\BBCQ\
\newblock {\Bem Theoretical Computer Science}, {\Bem 200\/}(1-2), 135--183.

\bibitem[\protect\BCAY{Švábenský, Čeleda, Vykopal,\ \BBA\
  Brišáková}{Švábenský et~al.}{2021}]{Svabensky2021}
Švábenský, V., Čeleda, P., Vykopal, J., \BBA\ Brišáková, S.
  \BBOP2021\BBCP.
\newblock \BBOQ Cybersecurity knowledge and skills taught in capture the flag
  challenges\BBCQ\
\newblock {\Bem Computers \& Security}, {\Bem 102}, 102154.

\end{thebibliography}
\bibliographystyle{template/theapa}

\end{document}